\DeclareSymbolFont{matha}{OML}{txmi}{m}{it}
\DeclareMathSymbol{\varv}{\mathord}{matha}{118}
\begin{document}
	\title{Performance of Double-Stacked Intelligent Metasurface-Assisted Multiuser Massive MIMO Communications in the Wave Domain } 
	\author{Anastasios Papazafeiropoulos, Pandelis Kourtessis, Symeon Chatzinotas, Dimitra I. Kaklamani, 			Iakovos S. Venieris \thanks{A. Papazafeiropoulos is with the Communications and Intelligent Systems Research Group, University of Hertfordshire, Hatfield AL10 9AB, U. K., and with SnT at the University of Luxembourg, Luxembourg.  P. Kourtessis is with the Communications and Intelligent Systems Research Group, University of Hertfordshire, Hatfield AL10 9AB, U. K.  S. Chatzinotas is with the SnT at the University of Luxembourg, Luxembourg. Dimitra I. Kaklamani is with the Microwave and Fiber Optics Laboratory, and Iakovos S. Venieris is  with the Intelligent Communications and Broadband Networks Laboratory, School of Electrical and Computer Engineering, National Technical University of Athens, Zografou, 15780 Athens,	Greece.	
			Corresponding author's email: tapapazaf@gmail.com.}}
	\maketitle\vspace{-1.7cm}
	\begin{abstract}	
		Although reconfigurable intelligent surface (RIS) is a promising technology for shaping the propagation environment, it consists of a single-layer structure within inherent limitations regarding the number of beam steering patterns. Based on the recently revolutionary technology, denoted as stacked intelligent metasurface (SIM), we propose its implementation not only on the base station (BS) side in a massive multiple-input multiple-output (mMIMO) setup but also in the intermediate space between the base station and the users to adjust the environment further as needed. For the sake of convenience, we call the former BS SIM (BSIM), and the latter channel SIM (CSIM). To this end, we achieve \textcolor{black}{hybrid} wave-based combining at the BS and wave-based  configuration at the intermediate space.   Specifically, we propose a channel estimation method with reduced overhead, being crucial for SIM-assisted communications. Next, we derive the uplink sum spectral efficiency (SE) in closed form in terms of statistical channel state information (CSI). Notably, we optimize the phase shifts of both BSIM and CSIM simultaneously by using the   projected  gradient ascent method (PGAM). Compared to previous works on SIMs, we study the uplink transmission in a mMIMO setup, channel estimation in a single phase, a second SIM at the intermediate space, and simultaneous optimization of the two SIMs. Simulation results show the impact of various parameters on the sum SE, and demonstrate the superiority of our optimization approach compared to the alternating optimization (AO) method.
	\end{abstract}
	\begin{keywords}
		Reconfigurable intelligent surface 	(RIS), stacked intelligent metasurfaces (SIM),  gradient projection,  6G networks.
	\end{keywords}
	
	\section{Introduction}
	In the pursuit of meeting the stringent demands of future wireless networks systems \cite{Boccardi2014}, massive multiple-input multiple-output (mMIMO) systems and millimeter-wave (mmWave) communications, suggested the previous years, have failed to meet the low energy consumption and the ubiquitous wireless connectivity requirements \cite{Zhang2020b}. For example, mMIMO systems exhibit low performance in poor scattering conditions despite their implementation with a large number of active elements which may also consume excessive energy \cite{Sohrabi2016,Zhang2020b}. In the case of mmWave communications, these require costly and power-hungry transceivers that can be easily blocked by  common obstacles such as walls \cite{Rappaport2015}. It is thus evident that it is crucial for next-generation networks to introduce a radical  paradigm,  which will give priority to energy sustainability  and  pervasive connectivity through the emergence of programmable wireless environments.
	
	In this direction, a prominent technology that can dynamically control the wireless environment in an almost passive manner while enhancing significantly the wireless channel quality is the reconfigurable intelligent surface (RIS) \cite{Wu2019,Basar2019,Bjoernson2020,Papazafeiropoulos2021,Papazafeiropoulos2022}. Generally, an RIS consists of an artificial surface implemented by a large number of cost-efficient nearly passive elements, which can induce independently phase shifts on the impinging electromagnetic (EM) waves through a smart controller. Thus, the RIS is capable of shaping dynamically the  propagation environment and reducing the energy consumption \cite{Huang2019}.
	
	Despite that RIS has been suggested for various communication scenarios because of its numerous advantages \cite{Wu2019,Basar2019,Bjoernson2020,DiRenzo2020,Kammoun2020,Yang2020b,Pan2020,Han2019,Zhang2021a,Zhao2020, Papazafeiropoulos2021,Papazafeiropoulos2021a,Gan2022,Zhi2023,Papazafeiropoulos2023,Papazafeiropoulos2023a}, existing research works have mostly considered single-layer metasurface structures, which limit the degrees of freedom for managing the beams. Moreover, it has been shown that RIS cannot suppress the inter-user interference because of their single-layer configuration and hardware limitations \cite{An2023}. The identification of these gaps made the authors in \cite{An2023} propose a stacked intelligent metasurface (SIM)  with remarkable signal processing capabilities compared to conventional RIS with a single layer. The proposed SIM is not a pure mathematical abstraction but it is based on tangible hardware prototypes \cite{Lin2018,Liu2022} and the technological  advancements in wave-based computing. In particular, \cite{Lin2018} proposed a deep neural network ($ \mathrm{D}^{2}\mathrm{NN} $), which has a similar structure to a SIM to perform parallel calculations at the speed of light by employing three-dimensional (3D) printed optical lenses and taking advantage of the wave properties of photons. However, this fabrication is fixed and cannot be retrained to perform other tasks. To this end, in \cite{Liu2022}, a programmable $ \mathrm{D}^{2}\mathrm{NN} $, which is closer to a SIM was designed with each reprogrammable artificial  neuron acting as the SIM meta-atoms. Hence,  they demonstrated the execution of various tasks, e.g., image classification through the flexible manipulation of the propagation of the EM waves through its numerous layers.
	
	These experimental results led to propose a SIM-based transceiver for point-to-point MIMO communications \cite{An2023}, where two SIMs are implemented at the  transmitter and the receiver, respectively  to perform precoding and  combining without any digital hardware, which reduced the need for a large number of antennas while the EM signals propagate through them. Therein, the phase shifts of each  SIM were optimised in an alternating optimisation (AO) manner based on instantaneous  channel state information (CSI) conditions. In \cite{Papazafeiropoulos2023}, we proposed a simultaneous optimization of the phase shifts of both the transmitter and the  receiver.   In \cite{An2023b},  the authors deployed a SIM-enabled base station (BS) to perform downlink beamforming  in the EM wave domain to multiple users. However, these works relied on perfect CSI  to design the SIMs. In \cite{Nadeem2023}, a hybrid wave-based  channel estimator was proposed with multiple subphases, where symbols were first precoded in the wave domain, and then, in the digital domain.
	
	\subsection{Contributions}
	\textcolor{black}{The reflections above led us to the crux of this work,  which is to introduce two types of SIMs in a mMIMO system with multiple users, one hybrid SIM at the BS and one at the  intermediate space under the assumptions of correlated fading and imperfect CSI and maximum-ratio combining (MRC), and study the uplink sum  spectral efficiency (SE). To this end, we apply the two-timescale methodology \cite{Zhao2020, Papazafeiropoulos2021,Zhi2023}, where the beamforming is based on instantaneous CSI and the optimization is based on statistical CSI to save significant overhead.}

	\textcolor{black}{	Our main contributions are summarised as follows:}
	\begin{itemize}
		\item 
		\textcolor{black}{	The two types of SIMs increase the technical novelty. 
			\begin{itemize}
				\item 		 \textcolor{black}{Notably, one of the main novelties of this work is the use  of  a hybrid digital and wave-based  SIM at the BS instead of a fully wave-based  SIM  to reduce the metasurface size appropriately since a fully wave-domain architecture  requires a large number of meta-atoms. It has to be mentioned that the advantage with respect to  a wave-based SIM regarding the reduction of the number of RF chains at the BS is not affected significantly according to simulations.}		
				\item \textcolor{black}{The introduction of an environment SIM instead of a conventional RIS to shape the propagation environment provides more degrees of freedom to enable local multiplexing in the vicinity of the SIM with less interference since a SIM can produce a non-diagonal response matrix, while  the conventional RIS has a diagonal response matrix.}
		\end{itemize}}
		\item  \textcolor{black}{Different from \cite{An2023b}, which assumes a wave-based architecture with a SIM only at the BS, not only do we consider a hybrid SIM-based BS, but we also consider a SIM in the intermediate space to aid the  communication between the users and the BS. In other words, we consider two types of SIMs. Also, we rely on  statistical CSI instead of instantaneous-based analysis in \cite{An2023b}, and we account for channel estimation. Moreover, we consider imperfect CSI, which is of practical interest. Compared to \cite{Nadeem2023}, which considers a hybrid wave-digital architecture with a SIM only at the BS that studies only the channel estimation, it assumes several subfaces to estimate the channel in SIM-assisted multi-user systems,  we manage to update the estimated channel in a single phase. Thus,  the approach in \cite{Nadeem2023}, requiring too much overhead, reduces the SE, while  our approach  does not appear this significant disadvantage. Note that \cite{Nadeem2023} does not study the SE but focuses only on channel estimation. }
		\item  \textcolor{black}{We conceive the problem of optimizing the phase shifts of both SIMs. Despite the nonconvexity of this problem, we propose a projected gradient ancient method (PGAM),  which optimises the phase shifts of both SIMs simultaneously. This is a noteworthy contribution since \cite{An2023} optimized the phase shifts in an AO way.} 
		\item \textcolor{black}{Simulations show the superiority of the proposed architecture,  examine which SIM has the greatest impact, and shed light on the effect of various system parameters on the sum SE. In particular, we show the advantages of the hybrid SIM at the BS compared to the fully wave-based SIM and the conventional digital BS.}
	\end{itemize} 
	
	\textit{Paper Outline}: The remainder of this paper is organized as follows. Section~\ref{System} describes the system and channel models of a BSIM and  CSIM-assisted mMIMO system. Section~\ref{ChannelEstimation} elaborates on the CE and the uplink data transmission with the obtained uplink sum SE. In Section \ref{PSConfig}, we present the problem formulation and the simultaneous optimization of both BSIM and CSIM. The numerical results are discussed in Section~\ref{Numerical}.  Section~\ref{Conclusion} concludes the paper.
	
	\textit{Notation}: Vectors and matrices are described by boldface lower and upper case symbols, respectively. The notations $(\cdot)^\T$, $(\cdot)^\H$, and $\tr\!\left( {\cdot} \right)$ denote the transpose, Hermitian transpose, and trace operators, respectively.  The notation $\EE\left[\cdot\right]$ expresses     the expectation operator, and $\diag\left(\bA\right) $ describes a vector with elements equal to the  diagonal elements of $ \bA $.  The notation  $\diag\left(\bx\right) $ denotes a diagonal  matrix whose elements are $ \bx $, while  $\bb \sim \cC\cN{(\b0,\mathbf{\Sigma})}$ denotes a circularly symmetric complex Gaussian vector with zero mean and a  covariance matrix $\mathbf{\Sigma}$. In the case of complex-valued $\mathbf{x}$ and ${\mathbf{y}}$, we denote $\langle\mathbf{x},\mathbf{y}\rangle=2\Re{\mathbf{x}^{\H}\mathbf{y}}$.

	\section{System and Channel Models}\label{System}
	\begin{figure}[!h]
		\begin{center}
			\includegraphics[width=0.9\linewidth]{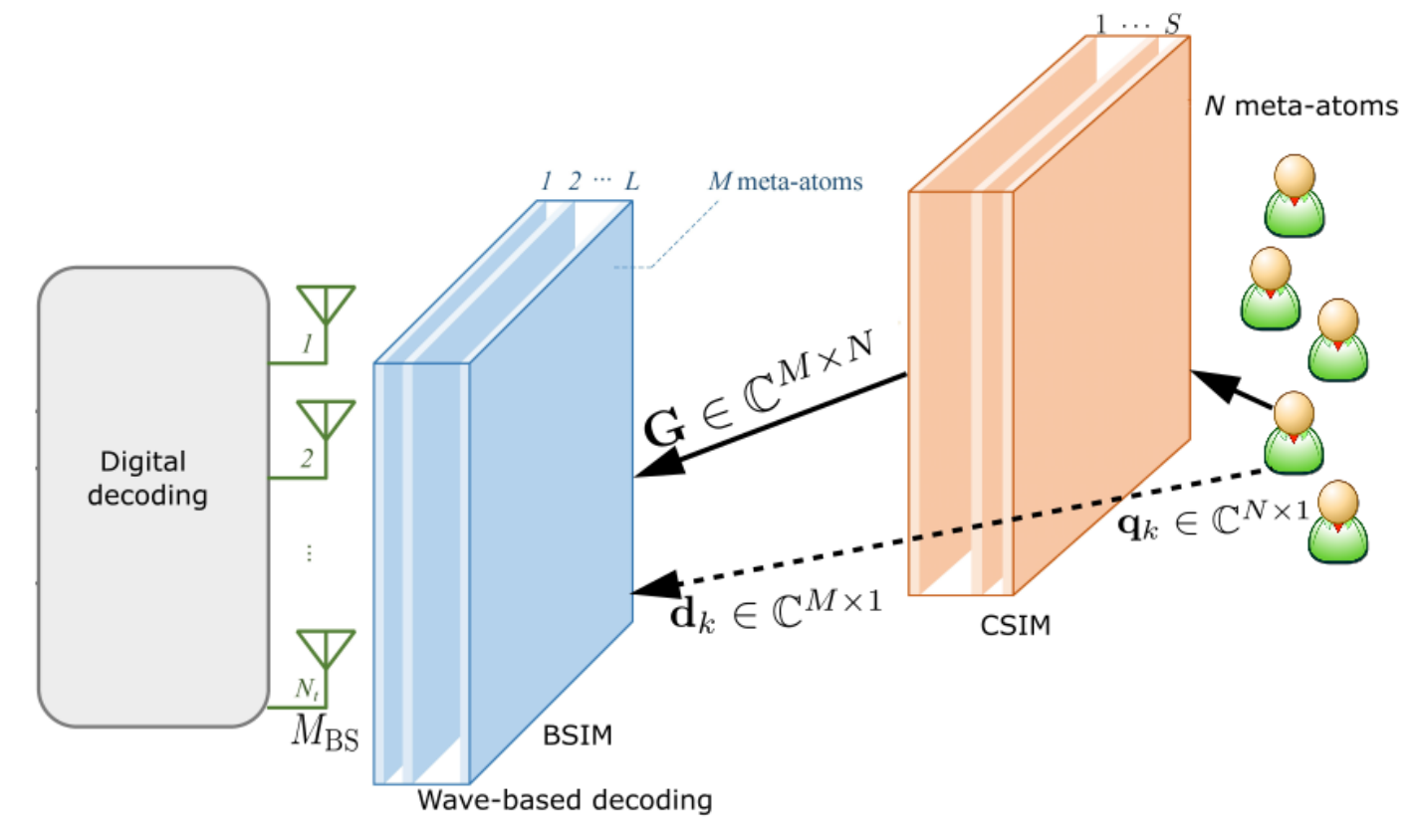}
			\caption{{ A BSIM and CSIM-assisted mMIMO system with multiple users.  }}
			\label{Fig1}
		\end{center}
	\end{figure} 
	\subsection{System Model}
	We consider the uplink of a multiuser MISO wireless system, where the BS is equipped with a large number of $ M_{\mathrm{BS}} $ antennas that serve $ K $ single-antennas users under the same time-frequency resources. In this context, we employ the BSIM, integrated into the BS to enable receive decoding in the EM wave domain, and the CSIM to enhance the communication between the users and the BS in the wave domain, as shown in Fig. \ref{Fig1}.
	For the fabrication of BSIM, we utilize an array of $ L $ metasurfaces, where each one of them consists of $ M $ meta-atoms. The corresponding sets are $ \mathcal{L}=\{1,\ldots,L\} $ and $\mathcal{M}=\{1,\ldots,M\} $. Similarly, the CSIM includes an array of $ S $ metasurfaces, each including $ N $  meta-atoms.\footnote{Without any loss of generality, we assume that the metasurfaces at each SIM include an identical number of  meta-atoms.} The corresponding sets are $ \mathcal{S}=\{1,\ldots,S\} $ and $\mathcal{N}=\{1,\ldots,N\} $. Both SIMs are connected to independent smart controllers, which can adjust  the corresponding  phase shifts of the EM waves that are transmitted through each of their meta-atoms. Note that the  forward propagation process in each SIM resembles a fully connected $ \mathrm{D}^{2}\mathrm{NN} $, where training the interconnection architecture of the  BSIM enables uplink decoding, while the respective training of the CSIM enables shaping the propagation environment. \textcolor{black}{It is worthwhile to mention that no special issue  appears in the case of a deployment of a CSIM since it is not significantly thicker than a conventional RIS. A typical number of layers is around $4$ as simulations show below.} We denote $ \bPhi^{l}=\diag(\bphi^{l})\in \mathbb{C}^{M \times M} $, where $ \bphi^{l} =[\phi^{l}_{1}, \dots, \phi^{l}_{M}]^{\T}\in \mathbb{C}^{M \times 1}$ is the  transmission coefficient matrix of the BSIM. Note that 
	$ \phi_{m}^{l} =e^{j \theta_{m}^{l}}$, where  $ \theta_{m}^{l}\in [0,2\pi), m \in \mathcal{M}, l \in \mathcal{L} $ is the phase shift by the $ m $-th meta-atom on the $ k $-th transmit metasurface layer. Similarly, for the CSIM, we denote $\bLambda^{s}=\diag(\blambda^{s})\in \mathbb{C}^{\textcolor{black}{N} \times N} $, where $ \blambda^{s} =[\lambda^{s}_{1}, \dots, \lambda^{s}_{M}]^{\T}\in \mathbb{C}^{N \times 1}$. Here, $ \lambda_{n}^{s}=e^{j \xi_{n}^{s}} $ with $ \xi_{n}^{s}\in [0,2\pi), n \in \mathcal{N}, s \in \mathcal{S} $ being the phase shift by the $ n $-th meta-atom on the $ s $-th CSIM metasurface layer.\footnote{We assume that the phase shifts are continuously-adjustable. Also, we assume that the modulus for the coefficient matrices of both SIMs equals $ 1 $ to evaluate the  maximization of the achievable rate, e.g., see \cite{Wu2019}. The study of more practical designs such as the consideration of  coupled phase and magnitude \cite{Abeywickrama2020} could be the topic of future work.}
	
	\subsection{Channel Model}
	The transmission coefficient from the $ \tilde{m} $th meta-atom on the $ (l-1) $st BSIM metasurface layer to the $ m $th meta-atom on the $ l $th BSIM metasurface layer, provided by the Rayleigh-Sommerfeld diffraction theory \cite{Lin2018}, is written as 
	\begin{align}
		w_{m,\tilde{m}}^{l}=\frac{A_{t}\cos x_{m,\tilde{m}}^{l}}{r_{m,\tilde{m}}^{l}}\left(\frac{1}{2\pi r^{l}_{m,\tilde{m}}}-j\frac{1}{\lambda}\right)e^{j 2 \pi r_{m,\tilde{m}}^{l}/\lambda}, l \in \mathcal{L},\label{deviationTransmitter}
	\end{align}
	where $ A_{t} $ expresses the area of each meta-atom, $ x_{m,\tilde{m}}^{l} $ is the angle between the propagation direction and the normal direction of the $ (l-1) $th BSIM metasurface layer, and $ r_{m,\tilde{m}}^{l} $, is the corresponding transmission distance. As a result, the effect of the BSIM can be modeled as
	\begin{align}
		\bP=\bPhi^{L}\bW^{L} \cdots \bPhi^{2}	\bW^{2} \bPhi^{1} \in \mathbb{C}^{M \times M },\label{TransmitterSIM}
	\end{align}
	where $ \bW^{l}\in \mathbb{C}^{M \times M}, l \in \mathcal{L}/\{1\} $ is the transmission coefficient matrix between the $ (l-1)st $ BSIM metasurface layer and the $ l $th transmit metasurface layer. Regarding $ \bW^{1} \in \mathbb{C}^{M \times M_{\mathrm{BS}}} $, it is the transmission coefficient matrix between the $ M_{\mathrm{BS}} $ transmit antennas
	and the first metasurface layer of the BSIM.

	In the CSIM,  the transmission coefficient from the $ n $th meta-atom
	on the $ s $th  metasurface layer to the $ \tilde{n} $th meta-atom on the $ (s-1) $st metasurface layer can be written as
	\begin{align}
		u_{\tilde{n},n}^{s}=\frac{A_{r}\cos \zeta_{\tilde{n},n}^{s}}{t_{\tilde{n},n}^{s}}\left(\frac{1}{2\pi t^{k}_{\tilde{n},n}}-j\frac{1}{\lambda}\right)e^{j 2 \pi t_{\tilde{n},n}^{s}/\lambda}, s \in \mathcal{S},\label{deviationReceiver}
	\end{align}
	where $ A_{r} $ denotes the area of each meta-atom in the CSIM, $ \zeta_{\tilde{n},n}^{s} $ expresses the angle between the propagation direction and the normal direction of the $ (s-1) $th CSIM metasurface layer, and $ t_{\tilde{n},n}^{s} $ is the corresponding transmission distance. In this case, the effect of the CSIM is described by
	\begin{align}
		\bZ=\bLambda^{S}\bU^{S}\cdots\bLambda^{2}\bU^{2}\bLambda^{1}\textcolor{black}{\bU^{1}}	\in \mathbb{C}^{N \times N},\label{ReceiverSIM}
	\end{align}
	where $ \bU^{s}\in \mathbb{C}^{N \times N}, s \in \mathcal{S} $
	is the transmission coefficient matrix between the $ s $th CSIM metasurface layer to the $ (s-1) $st CSIM metasurface layer.
	\textcolor{black}{\begin{remark}
			The  propagation coefficients (obtained by the inter-layer propagation formula) result in an attenuation of the signal. The more layers there are the greater the signal attenuation. Reducing the layer spacing appropriately could help reduce the signal attenuation. For example, with a large layer spacing that causes attenuation $0.5$, doubling the layers would result in \textcolor{black}{$0.5\times 0.5 = 0.25$}. 	However, if we could reduce the layer spacing to get about a $0.9$ attenuation, doubling the layers would result in \textcolor{black}{$0.9 \times 0.9 = 0.81$}. \textcolor{black}{Regarding the signal attenuation,  the cascaded loss associated with each propagation path becomes more severe as the number of layers increases. Nevertheless, each metasurface layer has a large aperture that helps to mitigate this loss. As a result, the overall SIM response is not expected to experience significant signal attenuation. In fact, recent research shows that increasing the number of layers may even lead to a gain increase \cite{Wang2024}.} Further study of the signal attenuation could be the topic of future work.
	\end{remark}}
	
	Based on  narrowband quasi-static block fading channel modeling, we assume that  the duration of each block is $\tau_{\mathrm{c}}$ channel uses. Also, by adopting the time-division-duplex (TDD) protocol as the  recommended protocol for next-generation systems such as mMIMO systems, $\tau$ channel uses are allocated for the uplink training phase and $\tau_{\mathrm{c}}-\tau$ channel uses for the uplink data transmission phase.

	Now, we denote   $ \bG=[\bg_{1}\ldots,\bg_{N} ] \in \mathbb{C}^{M \times N}$  the channel matrix between the BSIM and the CSIM, where  $ \bg_{i} \in \mathbb{C}^{M \times 1}$ for $ i\in \mathcal{N} $. Moreover,		$ \bq_{k} \in \mathbb{C}^{N \times 1}$ expresses  the channel between the CSIM and user $ k $. Also, we assume the presence of a direct link between the BSIM  and user  $ k $. We denote the corresponding link  as $ \bd_{k} $. In this work, we assume the presence of realistic correlated Rayleigh fading, which occurs in practice \cite{Bjoernson2020}.\textcolor{black}{ The  analysis with correlated Rician fading that  includes a line-of-sight (LoS) component could be the topic of future work.} Specifically, we have
	\begin{align}
		\bG&=\sqrt{\tilde{ \beta}_{g}}\bR_{\mathrm{BSIM}}^{1/2}\bD\bR_{\mathrm{CSIM}}^{1/2},\label{eq2}\\
		\bq_{k}&=\sqrt{\tilde{ \beta}_{k}}\bR_{\mathrm{CSIM}}^{1/2}\bc_{k},\\	
		\bd_{k}&=\sqrt{\bar{ \beta}_{k}}\bR_{\mathrm{BSIM}}^{1/2}\bar{\bc}_{k},
	\end{align}
	where $ \bR_{\mathrm{BSIM}} \in \mathbb{C}^{M \times M} $ and $ \bR_{\mathrm{CSIM}} \in \mathbb{C}^{N \times N} $ describe the  correlation matrices at the BSIM and the CSIM respectively. Note that these matrices are deterministic Hermitian-symmetric positive semi-definite and are assumed to be known by the network. The   spatial correlation matrices at the BSIM and CSIM are given by \cite{Bjoernson2020}
	\begin{align}
		[\bR_{\mathrm{BSIM}}]_{m,\tilde{m}}&=\mathrm{sinc}(2 r_{m,\tilde{m}}/\lambda), \tilde{m}\in \mathcal{M}, m\in \mathcal{M},\label{corB}\\
		[\bR_{\mathrm{CSIM}}]_{\tilde{n},n}&=\mathrm{sinc}(2 t_{\tilde{n},n}/\lambda), n\in \mathcal{N}, \tilde{n}\in \mathcal{N},\label{corC}
	\end{align}
	where $ r_{m,\tilde{m}} $ and $ t_{\tilde{n},n} $ are the corresponding meta-atom spacings. The path-losses $\tilde{ \beta}_{g} $, $ \bar{ \beta}_{k} $, and $\tilde{ \beta}_{k} $ correspond to BSIM-CSIM, CSIM-user $ k $, and BSIM-user $ k $ links,  while  $ \bc_{k} \sim \mathcal{CN}\left(\b0,\Id_{N}\right) $,  $ \bar{\bc}_{k} \sim \mathcal{CN}\left(\b0,\Id_{M}\right)$, and $ \mathrm{vec}(\bD)\sim \mathcal{CN}\left(\b0,\Id_{MN}\right) $  represent the small-scale fading parts. Contrary to most previous works, which assume that one of the links of the cascaded channels is deterministic, i.e., it represents a LoS component \cite{Nadeem2020,Papazafeiropoulos2021}, we consider a more general analysis, since we consider  small-scale fading among all links.

	By assuming that the phase shift matrices $ \bPhi^{l}, l \in \mathcal{L}$  and $ \bLambda^{s}, s\in \mathcal{S}$ are given,  the aggregated channel vector from the last BSIM layer to  user $ k $,  includes the cascaded and direct channels, i.e., $ \bh_{k}= \bG \bZ \bq_{k}+\bd_{k} \in \mathbb{C}^{M \times 1} $, which  has a covariance matrix  given by
	$ \bR_{k}=\EE\{\bh_{k}\bh_{k}^{\H}\} $. Specifically, we have
	\begin{align}
		&	\bR_{k}=	\EE\{\bG \bZ \EE\{\bq_{k}\bq_{k}^{\H}\}\bZ^{\H}\bG^{\H}\}+\bar{ \beta}_{k}\bR_{\mathrm{BSIM}}\label{cov1}\\
		&=	\textcolor{black}{\tilde{\beta}_{k}}\EE\{\bG \bZ \bR_{\mathrm{CSIM}}\bU^{1^{\H}}\bZ^{\H}\bG^{\H}\}+\bar{ \beta}_{k}\bR_{\mathrm{BSIM}}\label{cov2}\\
		&\textcolor{black}{=\hat{\beta}_{k}\mathrm{tr}(\bR_{\mathrm{CSIM}} \bZ \bR_{\mathrm{CSIM}}\bZ^{\H}\}\bR_{\mathrm{BSIM}}+\bar{ \beta}_{k}\bR_{\mathrm{BSIM}}}\label{cov30}
	\end{align}
	where, in \eqref{cov1}, we have applied the independence between $ \bG $ and $ \bq_{k} $,  denoted  $\hat{\beta}_{k}= \tilde{ \beta}_{g}\tilde{ \beta}_{k} $, and used $ \EE\{	\bd_{k}	\bd_{k}^{\H}\} =\bar{ \beta}_{k}\bR_{\mathrm{BSIM}} $.  Next, we have used that $ \EE\{	\bq_{k}	\bq_{k}^{\H}\} =\tilde{ \beta}_{k} \bR_{\mathrm{CSIM}}$. In \eqref{cov30},  we have used the property $ \EE\{\bV \bU\bV^{\H}\} =\tr (\bU) \Id_{M}$, where $\bU  $ is a deterministic square matrix, and $ \bV $ is any matrix with independent and identically distributed (i.i.d.) entries of zero mean and unit variance. 

	\section{Channel Estimation and Uplink Data Transmission}\label{ChannelEstimation}
	\subsection{Channel Estimation}
	Given that perfect CSI is not attainable in practice, we employ the TDD protocol that considers an uplink training phase. In this phase, pilot symbols are transmitted to estimate the channel. However, a SIM has the peculiar characteristic that it cannot process any signals, i.e., it cannot receive pilots and it cannot transmit them because it is implemented without any RF chains.
	
	Herein, we propose a channel estimation method, where we estimate the aggregated channel as in \cite{Nadeem2020, Papazafeiropoulos2021,Deshpande2022} instead of the individual channels as in \cite{Yang2020b,Wu2021,Zheng2022}. The latter approach comes with extra power  and hardware cost. Moreover, the large dimension of the BSIM-CSIM assisted channel would require  a  pilot overhead that would be prohibitively high if we estimated the  individual channels. Also, we manage to express the estimated channel in closed form. Hence, the proposed method follows.
	
	During the  training phase, which lasts $ \tau $ channel uses, it is assumed that  all users  send orthogonal pilot sequences. In particular, $\bx_{k}=[x_{k,1}, \ldots, x_{k,\tau}]^{\H}\in \mathbb{C}^{\tau\times 1} $ with $ \bx_{k}^{\H}\bx_{l}=0~\forall k\ne l$ and $ \bx_{k}^{\H}\bx_{k}= \tau \rho$ joules denotes the pilot sequence of user $ k  $, where  all users use the same normalized signal-to-noise ratio (SNR) $ \rho $ for transmitting each pilot symbol during 	the training phase.
	
	Overall, the BS receives 
	\begin{align}
		\bY^{\tr}=\sum_{i=1}^{K} \bW^{1^{\H}}\bP^{H}\bh_{i}\bx_{i}^{\H} +
		\bZ^{\tr},\label{train1}
	\end{align}
	with  $ \bZ^{\tr} \in \mathbb{C}^{M \times \tau} $ being the received AWGN matrix having independent columns with each one distributed as $ \mathcal{CN}\left(\b0,\Id_{M}\right)$. 
	
	By multiplying \eqref{train1} with $ \bx_{k} $, we obtain
	\begin{align}
		\br_{k}=\bc_{k}+\frac{\bz_{k}}{ \tau \rho},\label{train2}
	\end{align}
	where $ \bc_{k}=\bW^{1^{\H}}\bP^{H}\bh_{k} $ and  $ \bz_{k}=\bZ^{\tr} \bx_{k}$.

	Based on linear MMSE estimation,   the overall perfect channel can be written as
	\begin{align}
		\bc_{k}=\hat{\bc}_{k}+\tilde{\bc}_{k}\label{current}, 
	\end{align}
	where $\hat{\bc}_{k}$ and  $\tilde{\bc}_{k}$ are the channel estimate and  estimation channel error vectors, respectively. The following lemma provides the linear minimum mean-squared error (MMSE)  estimated channel. Note that $\hat{\bc}_{k}$ and $\tilde{\bc}_{k}$ are uncorrelated and each of them has zero mean, but they are not independent~\cite{Bjoernson2017}.

	\begin{lemma}\label{LemmaDirectChannel}
		The linear MMSE estimated channel $ \bc_{k} $ of  user $ k $ at the BS  is obtained as
		\begin{align}
			\hat{\bc}_{k}=\hat{\bR}_{k}\bQ_{k} \br_{k},\label{estim1}
		\end{align}
		where $ \hat{\bR}_{k}=\bW^{1^{\H}}\bP^{H}\bR_{k}\bW^{1}\bP $, $ \bQ_{k}\!=\! \left(\!\hat{\bR}_{k}\!+\!\frac{1}{ \tau \rho }\Id_{M}\!\right)^{\!-1}$, and $ \br_{k}$ is the noisy channel given by \eqref{train2}. 
	\end{lemma}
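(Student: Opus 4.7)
The plan is to apply the standard linear MMSE formula to the observation model \eqref{train2}, treating $\bc_k$ as the quantity to be estimated and $\br_k$ as the noisy observation. For any zero-mean vectors $\bc_k$ and $\br_k$, the linear MMSE estimator is
\begin{align}
\hat{\bc}_k = \EE[\bc_k \br_k^{\H}]\,\bigl(\EE[\br_k \br_k^{\H}]\bigr)^{-1}\br_k,\nonumber
\end{align}
so the task reduces to computing the cross-covariance and the observation covariance in closed form, and then rearranging the result to match the stated expression in terms of $\hat{\bR}_k$ and $\bQ_k$.

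First I would establish that $\bc_k$ has zero mean and covariance $\hat{\bR}_k$. Since $\bc_k = \bW^{1^{\H}}\bP^{\H}\bh_k$ with $\bh_k=\bG\bZ\bq_k+\bd_k$ being zero-mean, $\bc_k$ is zero-mean, and
\begin{align}
\EE[\bc_k\bc_k^{\H}] = \bW^{1^{\H}}\bP^{\H}\EE[\bh_k\bh_k^{\H}]\bP\bW^{1}=\hat{\bR}_k,\nonumber
\end{align}
using the definition $\bR_k=\EE[\bh_k\bh_k^{\H}]$ established in \eqref{cov30}. Next I would process the noise: since the columns of $\bZ^{\tr}$ are i.i.d.\ $\mathcal{CN}(\b0,\Id)$ and the pilot $\bx_k$ is deterministic with $\bx_k^{\H}\bx_k=\tau\rho$, the vector $\bz_k=\bZ^{\tr}\bx_k$ is zero-mean with covariance $\tau\rho\,\Id$; therefore the scaled noise $\bz_k/(\tau\rho)$ has covariance $\frac{1}{\tau\rho}\Id$. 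The pilot-orthogonality condition $\bx_k^{\H}\bx_{\ell}=0$ for $\ell\neq k$ removes all contributions from the other users in \eqref{train1} after correlating with $\bx_k$, which is precisely what yields the clean single-user model \eqref{train2}.

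Combining these two facts, and invoking the independence of $\bh_k$ and the additive noise $\bZ^{\tr}$, I would obtain
\begin{align}
\EE[\bc_k\br_k^{\H}] &= \EE[\bc_k\bc_k^{\H}] = \hat{\bR}_k,\nonumber\\
\EE[\br_k\br_k^{\H}] &= \hat{\bR}_k+\tfrac{1}{\tau\rho}\Id_{M}.\nonumber
\end{align}
Substituting these into the generic linear MMSE expression above gives exactly $\hat{\bc}_k=\hat{\bR}_k\bQ_k\br_k$ with $\bQ_k=(\hat{\bR}_k+\tfrac{1}{\tau\rho}\Id)^{-1}$, which is the claim.

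I do not expect a genuine obstacle here; the argument is a textbook application of linear MMSE. The only points that require minor care are (i) remembering that the linear MMSE formula does not require Gaussianity, so the non-Gaussian nature of the cascaded channel $\bh_k=\bG\bZ\bq_k+\bd_k$ (a product of independent Gaussians plus a Gaussian) is irrelevant; (ii) correctly bookkeeping the dimensions after the wave-domain compression by $\bW^{1^{\H}}\bP^{\H}$, so that $\hat{\bR}_k$ inherits the right pre- and post-multiplications; and (iii) noting that, strictly speaking, the estimator is optimal in the MMSE sense only within the linear class because $\bc_k$ is not Gaussian, which is the standard caveat and matches the statement of the lemma.
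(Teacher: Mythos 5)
Your proposal is correct and follows essentially the same route as the paper's proof: apply the generic linear MMSE formula to the single-user observation model \eqref{train2}, compute the cross-covariance $\EE\{\bc_k\br_k^{\H}\}=\hat{\bR}_k$ using the independence of $\bh_k$ and the noise, compute $\EE\{\br_k\br_k^{\H}\}=\hat{\bR}_k+\frac{1}{\tau\rho}\Id_M$, and substitute. Your additional remarks (explicit verification that $\bz_k/(\tau\rho)$ has covariance $\frac{1}{\tau\rho}\Id_M$, and the observation that Gaussianity is not needed for LMMSE optimality within the linear class) are correct and, if anything, slightly more careful than the paper's own write-up.
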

	\begin{proof}
		Please see Appendix~\ref{Lemma1}.	
	\end{proof}
	
	The covariances of \textcolor{black}{$\hat{\bc}_{k}$ and  $\tilde{\bc}_{k}$} are easily obtained as
	\begin{align}
		\bPsi_{k}\!&=\!\hat{\bR}_{k}\bQ_{k}\hat{\bR}_{k},\label{Psiexpress}\\
		\tilde{\bPsi}_{k}&=\hat{\bR}_{k}-\bPsi_{k}.\label{Psiexpress1}
	\end{align}
	
	As can be seen, in this method, the  length of the pilot sequence depends only on the number of users $ K $, and is independent of the number of surfaces and elements at the BSIM and CSIM, which are generally large. Thus, we achieve substantial overhead reduction compared to estimating the individual channel, where the complexity would increase with increasing SIMs sizes. Also, we achieve to obtain the estimated overall channel  in closed form instead of  works such as \cite{Nadeem2023} that cannot result in analytical expressions.
	
	Another benefit of the proposed two-timescale  method is that the channel is required  to be estimated  when the large-scale statistics (statistical CSI) varies, which is at every several coherence intervals. On the contrary, instantaneous CSI-based approaches would require channel estimation at every coherence interval that would result in high power consumption and computational complexity.

	\subsection{Uplink Data Transmission}
	In this subsection, we provide the derivation of the uplink achievable sum SE of a SIM-assisted mMIMO system.
	
	During the uplink data transmission phase, the BS receives
	\begin{align}
		\by=\sqrt{\rho}\sum_{i=1}^{K}\bW^{1^{\H}}\bP^{H}\bh_{i}s_{i} +\bn,\label{ULTrans}
	\end{align}
	where $\bn\sim \mathcal{CN}\left(\b0,\Id_{M}\right) $ is the AWGN at the BS, $ s_{i}\sim \mathcal{CN}(0,1) $ is the transmit signal vector, $ \rho $ is the uplink SNR, and $ \bh_{i} $ is the aggregated channel from the last BSIM layer to  user $ i $.
	
	After application of  a receive combining vector $ \bv_{k} $ as $ \hat{s}_{k}= \bv_{k}^{\H}\by_{}$, the BS obtains $ s_{k} $ from user $ k $ as
	\begin{align}
		\hat{s}_{k}&=\sqrt{\rho}\EE\left\{\bv_{k}^{\H}\bW^{1^{\H}}\bP^{H}{\bh}_{k}\right\} s_{k} \nn\\
		&+ \sqrt{\rho}\left(\bv_{k}^{\H}\bW^{1^{\H}}\bP^{H}{\bh}_{k}-\EE\left\{\bv_{k}^{\H}\bW^{1^{\H}}\bP^{H}{\bh}_{k}\right\}\right) s_{k}\nn\\
		&+\sum_{i\ne k}^{K}\sqrt{\rho}\bv_{k}^{\H}\bW^{1^{\H}}\bP^{H}{\bh}_{i}s_{i}+ \bv_{k}^{\H}\bn,\label{received1}
	\end{align}
	where we have also accounted for channel hardening, which is applicable in mMIMO systems with $ M>8 $ \cite{Bjoernson2017}. In \eqref{received1}, the first term is the desired signal,  the second term is the beamforming gain uncertainty, the third term is multiuser interference, and the last term is the post-processed AWGN noise.

	Next, we derive the use-and-then-forget (UaTF) bound on the uplink average SE, which is a lower bound \cite{Bjoernson2017}. For this reason, we assume the worst-case uncorrelated additive noise scenario for the inter-user interference and the AWGN noise. In this case, the achievable SE can be written as 
	\begin{align}
		\mathrm{SE}_{k}	=\frac{\tau_{\mathrm{c}}-\tau}{\tau_{\mathrm{c}}}\log_{2}\left ( 1+\gamma_{k}\right)\!,\label{LowerBound}
	\end{align}
	where the fraction represents the percentage of samples transmitted during the uplink data transmission, and  $ \gamma_{k}=\frac{S_{k}}{I_{k}}$ denotes  the uplink signal-to-interference-plus-noise ratio (SINR). Here, $ S_{k}$ and  $I_{k} $
	describe the desired signal power and the interference plus noise power. In terms of $ \gamma_{k} $, we have
	\begin{align}
		S_{k}&=\rho|\EE\left\{\bv_{k}^{\H}\bc_{k}\right\}|^{2},\label{sig11}\\
		I_{k}&=\EE\left\{|\bv_{k}^{\H}\bc_{k}-\EE\left\{\bv_{k}^{\H}\bc_{k}\right\}|^{2}\right\}+\sum_{i\ne k}^{K}\EE\left\{|\bv_{k}^{\H}\bc_{i}|^{2}\right\}
		\nn\\&+\!\EE\left\{|\bv_{k}^{\H}\bn|^{2}\right\}\!\label{int1}.
	\end{align}

	In this work, we assume MRC decoding to provide a closed-form SE, i.e., $ \bv_{k}=\hat{\bc}_{k}$.\footnote{The  application of the MMSE decoder, which is another common linear receiver used for mMIMO systems due to its better performance, is the topic of ongoing work. The reason is that the corresponding analysis requires the use of deterministic equivalent tools \cite{Hoydis2013,Papazafeiropoulos2015a,Papazafeiropoulos2016}, which result in complicated expressions that would result in an intractable optimization regarding the phase shifts matrices including in  BSIM and CSIM. }

	\begin{theorem}\label{theorem:ULSINR}
		The uplink  SINR of user $k$ for given SIMs in a BSIM and CSIM-assisted mMIMO system is tightly approximated as
		
		\begin{equation}
			\gamma_{k} \ {\color{black}\approx}\ 	\frac{S_{k}}{	\color{black}\tilde{I}_{k}},\label{gammaSINR}
		\end{equation}
		where
		\begin{align}
			{S}_{k}&=\tr^{2}\left(\bPsi_{k}\right)\!,\label{Num1}\\
			\tilde{I}_{k}&=\sum_{i =1}^{K}
			\tr\!\left( \hat{\bR}_{i}\bPsi_{k}\right)-\tr\left( \bPsi_{k}^{2}\right)+\frac{1}{\rho}\tr(\bPsi_{k}).\label{Den1}
		\end{align}
	\end{theorem}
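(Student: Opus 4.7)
The plan is to substitute the MRC combiner $\bv_k=\hat{\bc}_k$ into \eqref{sig11}--\eqref{int1}, exploit the orthogonality of the LMMSE estimate and error (together with the independence across users induced by orthogonal pilots), and apply a channel-hardening approximation for the self-interference term. Throughout, I will use $\bc_k=\hat{\bc}_k+\tilde{\bc}_k$ with $\EE\{\hat{\bc}_k\tilde{\bc}_k^{\H}\}=\mat{0}$, as well as $\EE\{\hat{\bc}_k\hat{\bc}_k^{\H}\}=\bPsi_k$ from \eqref{Psiexpress} and $\EE\{\tilde{\bc}_k\tilde{\bc}_k^{\H}\}=\tilde{\bPsi}_k=\hat{\bR}_k-\bPsi_k$ from \eqref{Psiexpress1}. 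To obtain the final form, I will divide both numerator and denominator of the SINR by $\rho$, which is why the noise contribution appears as $\frac{1}{\rho}\tr(\bPsi_k)$ and the factor $\rho$ is absent from \eqref{Num1}.

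For the signal power, I would write $\EE\{\hat{\bc}_k^{\H}\bc_k\}=\EE\{\hat{\bc}_k^{\H}\hat{\bc}_k\}+\EE\{\hat{\bc}_k^{\H}\tilde{\bc}_k\}=\tr(\bPsi_k)+0$, so that after dividing by $\rho$, equation \eqref{sig11} yields $S_k=\tr^{2}(\bPsi_k)$. The noise contribution is handled by conditioning on the channel estimate: since $\bn\sim\CN(\b0,\Id_M)$ is independent of $\hat{\bc}_k$, one gets $\EE\{|\hat{\bc}_k^{\H}\bn|^{2}\}=\EE\{\hat{\bc}_k^{\H}\hat{\bc}_k\}=\tr(\bPsi_k)$, contributing the $\frac{1}{\rho}\tr(\bPsi_k)$ term to $\tilde{I}_k$.

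For the interference, the key observation is that the beamforming-gain-uncertainty term and the $i\ne k$ interference terms can be combined into $\sum_{i=1}^{K}\EE\{|\hat{\bc}_k^{\H}\bc_i|^{2}\}-|\EE\{\hat{\bc}_k^{\H}\bc_k\}|^{2}$. For $i\ne k$, since the users transmit orthogonal pilots, $\hat{\bc}_k$ is independent of $\bc_i$, so $\EE\{|\hat{\bc}_k^{\H}\bc_i|^{2}\}=\EE\{\hat{\bc}_k^{\H}\hat{\bR}_i\hat{\bc}_k\}=\tr(\hat{\bR}_i\bPsi_k)$ by the trace-expectation identity applied to the inner Gaussian-conditioning expectation. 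For $i=k$, I would decompose $\bc_k=\hat{\bc}_k+\tilde{\bc}_k$, expand, and use channel hardening (noting the remark just before the theorem and the fact that $M_{\mathrm{BS}}$ is large) to replace the quartic quantity $|\hat{\bc}_k^{\H}\hat{\bc}_k|^{2}$ by its mean $\tr^{2}(\bPsi_k)$, while the cross term vanishes in expectation by uncorrelatedness; this yields $\EE\{|\hat{\bc}_k^{\H}\bc_k|^{2}\}\approx \tr^{2}(\bPsi_k)+\tr(\tilde{\bPsi}_k\bPsi_k)=\tr^{2}(\bPsi_k)+\tr(\hat{\bR}_k\bPsi_k)-\tr(\bPsi_k^{2})$. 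Subtracting $|\EE\{\hat{\bc}_k^{\H}\bc_k\}|^{2}=\tr^{2}(\bPsi_k)$ and summing over $i$ then produces $\sum_{i=1}^{K}\tr(\hat{\bR}_i\bPsi_k)-\tr(\bPsi_k^{2})$, matching \eqref{Den1}.

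The main obstacle is the self-interference term $\EE\{|\hat{\bc}_k^{\H}\bc_k|^{2}\}$: because $\bc_k=\bW^{1^{\H}}\bP^{\H}\bh_k$ involves the product $\bG\bZ\bq_k$ in $\bh_k$, the effective channel is not jointly Gaussian, so the usual Gaussian fourth-moment identity $\EE\{\|\hat{\bc}_k\|^{4}\}=\tr^{2}(\bPsi_k)+\tr(\bPsi_k^{2})$ does not apply exactly. The ``tightly approximated'' qualifier in the theorem statement is precisely what absorbs this gap: justifying it requires the channel-hardening argument, which in turn relies on $M_{\mathrm{BS}}$ being large (as is standard for mMIMO and mentioned after \eqref{received1}). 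A careful writeup would note that replacing $\hat{\bc}_k^{\H}\hat{\bc}_k$ by its deterministic mean $\tr(\bPsi_k)$ is asymptotically tight under the usual mMIMO scaling, and this is what makes the closed-form expression tractable for the subsequent PGAM optimization of the BSIM and CSIM phase shifts.
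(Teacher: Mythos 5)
Your proposal is correct and follows essentially the same route as the paper's proof in Appendix~B: substitute $\bv_k=\hat{\bc}_k$, split $\bc_i=\hat{\bc}_i+\tilde{\bc}_i$, use uncorrelatedness of estimate and error plus independence across users for the $i\ne k$ terms, and invoke channel hardening to replace $\hat{\bc}_k^{\H}\hat{\bc}_k$ (and $\tilde{\bc}_k^{\H}\tilde{\bc}_k$) by their means so that the quartic self-term reduces to $\tr\big((\hat{\bR}_k-\bPsi_k)\bPsi_k\big)$, which combined with the $i=k$ summand gives exactly \eqref{Den1}. Your explicit remark that the cascaded channel is non-Gaussian, so the Gaussian fourth-moment identity cannot be used and hardening is what carries the ``tightly approximated'' qualifier, is a correct and slightly more careful articulation of the approximation the paper makes implicitly.
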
 
	\begin{proof}
		Please see Appendix~\ref{theorem1}.	
	\end{proof}
	
	Theorem \ref{theorem:ULSINR} provides a tight approximation as the  numerical results in Section \ref{Numerical} reveal. The tightness is also expected because it comes as a result of  the exploitation of  channel hardening  appearing in mMIMO systems~\cite{Bjoernson2017}.

	\section{Problem Formulation  and  Optimization}\label{PSConfig}
	It is crucial to optimize the sum SE in terms of the phase shift matrices of the SIMs included in the proposed architecture.
	
	\subsection{Problem Formulation}
	By assuming infinite-resolution phase shifters, the formulation of the optimization problem follows as 
	\begin{subequations}\label{eq:subeqns}
		\begin{align}
			(\mathcal{P})~~&\max_{\bphi_{l},\blambda_{s}} 	\;	f(\bphi_{l},\blambda_{s})=\sum_{k=1}^{K}\log_2\left(1+\frac{S_k}{\tilde{I}_k}\right)\label{Maximization1} \\
			&~	\mathrm{s.t}~~~	\bP=\bPhi^{L}\bW^{L}\cdots\bPhi^{2}\bW^{2}\bPhi^{1}\bW^{1},
			\label{Maximization3} \\
			&\;\quad\;\;\;\;\;\!\!~\!		\bZ=\bLambda^{K}\bU^{K}\cdots\bLambda^{2}\bU^{2}\bLambda^{1}\bU^{1},
			\label{Maximization4} \\
			&\;\quad\;\;\;\;\;\!\!~\!		\bPhi^{l}=\diag(\phi^{l}_{1}, \dots, \phi^{l}_{M}), l \in \mathcal{L},
			\label{Maximization5} \\
			&\;\quad\;\;\;\;\;\!\!~\!		\bLambda^{s}=\diag(\lambda^{s}_{1}, \dots, \lambda^{s}_{N}), k \in \mathcal{S},
			\label{Maximization6} \\
			&\;\quad\;\;\;\;\;\!\!~\!		|	\phi^{l}_{m}|=1, m \in \mathcal{M}, l \in \mathcal{L},	\label{Maximization7} \\
			&	\;\quad\;\;\;\;\;\!\!~\!		|\lambda^{s}_{n}|=1, n \in \mathcal{N}, s \in \mathcal{S}	\label{Maximization8}.
		\end{align}
	\end{subequations}
	
	In problem $ 	(\mathcal{P}) $, we have denoted $ f(\bphi_{l},\blambda_{s}) $ the objective function that includes the approximate of the SINR provided by Theorem \eqref{theorem:ULSINR}.

	\subsection{Simultaneous Optimization of Both SIMs}	
	The non-convexity  of problem $(\mathcal{P})  $	and the coupling between  the optimization variables of the two SIMs, which also have constant modulus constraints lead us to seek a locally optimal solution.  Note that many previous  optimization methods on RIS-aided systems have applied alternating optimization (AO) to optimize the relevant variables in each case \cite{Wu2019,Zhang2020a}. Although AO-based methods can be implemented easily, their convergence may require many iterations, which increase with increasing the size of RIS \cite{Perovic2021}. Since SIM-assisted systems are expected to include a large number of elements, the AO method is not suggested.
	
	Given that the projection operators can be obtained in closed form, we can apply the  projected  gradient ascent method (PGAM) \cite[Ch. 2]{Bertsekas1999} for the simultaneous optimization of $ \bphi_{l} $ and $ \blambda_{s} $ instead of the AO method. Numerical results in Section \ref{Numerical} will demonstrate the superiority of the proposed approach.

	\subsection{Proposed Algorithm-PGAM}	
	According to the proposed PGAM, starting from $ (\bphi_{l}^{0},\blambda_{l}^{0}) $, we shift towards the gradient of the objective function, i.e., $ \nabla f(\bphi_{l},\blambda_{s}) $, where  the rate of change of $ f(\bphi_{l},\blambda_{s}) $ becomes maximum. The newly computed points $ \bphi_{l} $ and  $\blambda_{s} $ are projected onto  $ \Phi_{l} $ and $	\Lambda_{s} $, respectively to keep the updated points inside the feasible set. 
	
	The step size of this shift is determined by a parameter $ \mu>0 $. For the sake of description, we define the following sets.
	\begin{align}
		\Phi_{l}&=\{\bphi_{l}\in \mathbb{C}^{M \times 1}: |\phi^{l}_{i}|=1, i=1,\ldots, M\},\\
		\Lambda_{s}&=\{\blambda_{s}\in \mathbb{C}^{N \times 1}: |\lambda^{s}_{i}|=1, i=1,\ldots, N\}.
	\end{align}

	The proposed PGAM solving \eqref{eq:subeqns} is outlined in Algorithm \ref{Algoa1}. As can be seen, the algorithm includes the following two iterations.
	\begin{align}
		\bphi_{l}^{i+1}&=P_{\Phi_{l}}(\bphi_{l}^{i}+\mu_{i}\nabla_{\bphi_{l}}f(\bphi_{l}^{i},\blambda_{s}^{i}))\label{p1}\\
		\blambda_{s}^{i+1}&=P_{\Lambda_{s}}(\blambda_{s}^{i}+\mu_{i}\nabla_{\blambda_{s}}f(\bphi_{l}^{i},\blambda_{s}^{i})), \label{p2}
	\end{align}
	where $ P_{\Phi_{l}}(\cdot) $ and $ P_{\Lambda_{s}}(\cdot) $ are the projections onto $ \Phi_{l} $ and $ \Lambda_{s} $, respectively.  Also, to  find the step size, we apply the Armijo-Goldstein backtracking line search. For this reason, we  define the quadratic approximation of $ f(\bphi_{l},\blambda_{s}) $ as
	\begin{align}
		&	Q_{\mu}(\bphi_{l}, \blambda_{s};\bx,\by)=f(\bphi_{l},\blambda_{s})+\langle	\nabla_{\bphi_{l}}f(\bphi_{l},\blambda_{s}),\bx-\bphi_{l}\rangle\nn\\
		&-\frac{1}{\mu}\|\bx-\bphi_{l}\|^{2}_{2}+\langle\nabla_{\blambda_{s}}f(\bphi_{l},\blambda_{s}),\by-\blambda_{s}\rangle\nn\\
		&-\frac{1}{\mu}\|\by-\blambda_{s}\|^{2}_{2}.
	\end{align}
	
	The step size in \eqref{p1} and \eqref{p2} can be found as $ \mu_{i} = L_{i}\kappa^{m_{i}} $ with  $ L_i>0 $, and $ \kappa \in (0,1) $, where the parameter $ m_{n} $ is the
	smallest nonnegative integer satisfying
	\begin{align}
		f(\bphi_{l}^{i+1},\blambda_{s}^{i+1})\geq	Q_{L_{i}\kappa^{m_{i}}}(\bphi_{l}^{i}, \blambda_{s}^{i};\bphi_{l}^{i+1},{\color{black}\blambda_{s}^{i+1}}),
	\end{align}
	which can be performed by an iterative procedure.

	\begin{algorithm}[th]
		\caption{Simultaneous PGAM for both SIMS\label{Algoa1}}
		\begin{algorithmic}[1]
			\STATE Input: $\bphi_{l}^{0},\blambda_{s}^{0},\mu_{1}>0$, $\kappa\in(0,1)$
			\STATE $i\gets1$
			\REPEAT
			\REPEAT \label{ls:start}
			\STATE $\bphi_{l}^{i+1}=P_{\Phi_{l}}(\bphi_{l}^{i}+\mu_{ni}\nabla_{\bphi_{l}}f(\bphi_{l}^{i},\blambda_{s}^{i}))$
			\STATE $\blambda_{s}^{i+1}=P_{\Lambda_{s}}(\blambda_{s}^{i}+\mu_{i}\nabla_{\blambda_{s}}f(\bphi_{l}^{i},\blambda_{s}^{i}))$
			\IF{ $f(\bphi_{l}^{i+1},\blambda_{s}^{i+1})\leq Q_{\mu_{i}}(\bphi_{l}^{i},\blambda_{s}^{i};\bphi_{l}^{i+1},{\color{black}\blambda_{s}^{i+1}})$}
			\STATE $\mu_{i}=\mu_{i}\kappa$
			\ENDIF
			\UNTIL{ $f(\bphi_{l}^{i+1},\blambda_{s}^{i+1})>Q_{\mu_{i}}(\bphi_{l}^{i},\blambda_{s}^{i};\bphi_{l}^{i+1},\blambda_{s}^{i+1}})$\label{ls:end}
			\STATE $\mu_{i+1}\leftarrow \mu_{i}$
			\STATE $i\leftarrow i+1$
			\UNTIL{ convergence}
			\STATE Output: $\bphi_{l}^{i+1},\blambda_{s}^{i+1}$
		\end{algorithmic}
	\end{algorithm}

	\subsection{Complex-valued Gradients}
	Below, we present the  gradients   $\nabla_{\bphi_{l}}f(\bphi_{l},\blambda_{s}) $ and $ \nabla_{\blambda_{s}}f(\bphi_{l},\blambda_{s}) $.	
	
	\begin{proposition}\label{propositionGradient}
		The gradients of $f(\bphi_{l},\blambda_{s}) $ regarding  $\bphi_{l}^{*}$ and $\blambda_{s}^{*}$  are provided 
		closed-forms by
		
		\begin{align}
			\nabla_{\bphi_{l}}f(\bphi_{l},\blambda_{s})&=\frac{\tau_{c}-\tau}{\tau_{c}\log_{2}(e)}\sum_{k=1}^{K}\frac{\tilde{I}_{k}\nabla_{\bphi_{l}}S_{k}-S_{k}\nabla_{\bphi_{l}}\tilde{I}_{k}}{(1+\gamma_{k})\tilde{I}_{k}},\label{grad11}\\
			\nabla_{\blambda_{s}}f(\bphi_{l},\blambda_{s})&=\frac{\tau_{c}-\tau}{\tau_{c}\log_{2}(e)}\sum_{k=1}^{K}\frac{\tilde{I}_{k}\nabla_{\blambda_{s}}S_{k}-S_{k}\nabla_{\blambda_{s}}\tilde{I}_{k}}{(1+\gamma_{k})\tilde{I}_{k}},\label{grad12}
		\end{align}
		where
		\begin{align}
			\nabla_{\bphi_{l}}S_{k} 
			&=2\tr(\boldsymbol{\Psi}_{k})\diag(\bA_{l}^{\H}\bR_{k}\bW^{1}\bP\bB_k\bW^{1^{\H}}\bC_{l}^{\H})\label{differentialPhi71},\\
			\nabla_{\bphi_{l}}\tilde{I}_{k}
			&=\diag\big(\bA_{l}^{\H}\big(\bR_{k}\bW^{1}\bP\big({\bPsi}+\bar{\bPsi}_{k}\bB_k\big)	\nn\\
			&+\sum\nolimits _{i=1}^{K}\bR_{i}\bW^{1}\bP\bB_i\big)\bW^{1^{\H}}\bC_{l}^{\H}\big)\label{gradI10}
		\end{align}
		with $\boldsymbol{\Psi}=\sum\nolimits _{i=1}^{K}\boldsymbol{\Psi}_{i}$, $\bar{\boldsymbol{\Psi}}_{k}=\frac{1}{\rho}\Id-2\boldsymbol{\Psi}_{k}$, $\bB_{k}= \mathbf{Q}_{k}\hat{\mathbf{R}}_{k}-\mathbf{Q}_{k}\hat{\mathbf{R}}^{2}_{k}\mathbf{Q}_{k}+\hat{\mathbf{R}}_{k}\mathbf{Q}_{k} $, $ 	\bA_{l}=\bPhi^{L}\bW^{L}\cdots\bPhi^{l+1}\bW^{l+1} $,  $\bC_{l}= \bW^{l}\bPhi^{l-1}\bW^{l-1}\cdots \bPhi^{1} $, 
		and
		\begin{align}
			\nabla_{\blambda_{s}}S_{k}
			&=2\hat{\beta}_{k}\tr(\boldsymbol{\Psi}_{k})\tr(\bW^{1^{\H}}\bP^{H}\bR_{\mathrm{BSIM}}\bW^{1}\bP\bB_{k})	\nn\\
			&\times\diag(\bF_{s}{\bR}_{\mathrm{CSIM}}\bZ^{\H}\bR_{\mathrm{CSIM}}\bD_{s})^{\H},\\
			\nabla_{\blambda_{s}}\tilde{I}_{k}&=
			2\tr\big(\bW^{1^{\H}}\bP^{H}\bR_{\mathrm{BSIM}}\bW^{1}\bP	\nn\\
			&\times(\sum\nolimits _{i=1}^{K}\hat{\beta}_{i}\hat{\mathbf{R}}_{k}\bB_{i}+\hat{\beta}_{k}(\bar{\boldsymbol{\Psi}}_{k}\bB_{k}+\boldsymbol{\Psi})) \big)&\nn\\
			& \times\diag(\bF_{s}{\bR}_{\mathrm{CSIM}}\bZ^{\H}\bR_{\mathrm{CSIM}}\bD_{s})^{\H}
		\end{align}
		with $ \bD_{s}= \bU^{s-1}\bLambda^{s-1}\cdots\bU^{2}\bLambda^{1}$ and $ \bF_{s}= \bLambda^{S}\bU^{S}\cdots\bLambda^{s+1}\bU^{s+1} $.
	\end{proposition}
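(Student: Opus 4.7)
The plan is to compute the Wirtinger gradients layer by layer, chaining the differential $df$ through $\gamma_k\mapsto(S_k,\tilde I_k)\mapsto\bPsi_k\mapsto\hat\bR_k\mapsto\bP$ (respectively $\bZ$). First, the logarithmic derivative gives $df=\frac{1}{\ln 2}\sum_k\frac{d\gamma_k}{1+\gamma_k}$ and the quotient rule applied to $\gamma_k=S_k/\tilde I_k$ explains the prefactor structure in \eqref{grad11}--\eqref{grad12}. This reduces the task to obtaining $dS_k$ and $d\tilde I_k$ as $\tr(\bM\,d\bPhi^{l})$ plus its complex conjugate (and analogously for $\blambda_s$), after which the standard identity $\tr(\bM\,\diag(\bx))=\diag(\bM)^{\T}\bx$ delivers the $\diag(\cdot)$ expressions stated in the proposition.

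For $S_k=\tr^{2}(\bPsi_k)$, the chain rule gives $dS_k=2\tr(\bPsi_k)\,d\tr(\bPsi_k)$. The key differential is $d\bQ_k=-\bQ_k(d\hat\bR_k)\bQ_k$, which, combined with $\bPsi_k=\hat\bR_k\bQ_k\hat\bR_k$ and cyclic permutation of the trace, yields $d\tr(\bPsi_k)=\tr(\bB_k\,d\hat\bR_k)$ with exactly the $\bB_k$ stated. Using $\hat\bR_k=\bW^{1\H}\bP^{\H}\bR_k\bW^{1}\bP$ together with the factorisation $\bP=\bA_l\bPhi^{l}\bC_l$, we have $d\bP=\bA_l\diag(d\bphi^{l})\bC_l$; collecting the coefficient of $d\bphi^{l*}$ (the Wirtinger ascent direction) and applying the diag identity produces \eqref{differentialPhi71}.

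For $\tilde I_k=\sum_i\tr(\hat\bR_i\bPsi_k)-\tr(\bPsi_k^{2})+\rho^{-1}\tr(\bPsi_k)$, I would first expand $d\bPsi_k=(d\hat\bR_k)\bQ_k\hat\bR_k-\hat\bR_k\bQ_k(d\hat\bR_k)\bQ_k\hat\bR_k+\hat\bR_k\bQ_k(d\hat\bR_k)$, then cycle every resulting trace so that $d\hat\bR_k$ (or $d\hat\bR_i$) stands at the right. Grouping terms by which $\hat\bR_j$ carries the differential and invoking the shorthands $\bPsi=\sum_i\bPsi_i$ and $\bar\bPsi_k=\rho^{-1}\Id-2\bPsi_k$ collapses the many cross terms into the compact form \eqref{gradI10}. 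Substituting $d\hat\bR_j$ via $d\bP$ and applying the diag identity then completes the $\bphi_l$ part.

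The $\blambda_s$ gradients follow the same template, but with the additional feature that $\bR_k$ itself depends on $\bZ$ through the scalar factor $\tr(\bR_{\mathrm{CSIM}}\bZ\bR_{\mathrm{CSIM}}\bZ^{\H})$ in \eqref{cov30}. Differentiating this scalar produces the extra multiplier $\tr(\bW^{1\H}\bP^{\H}\bR_{\mathrm{BSIM}}\bW^{1}\bP\bB_k)$ (and, in the $\tilde I_k$ case, an analogous sum over $i$) which is absent in the $\bphi_l$ formulas. Writing $\bZ=\bF_s\bLambda^{s}\bD_s$ and $d\bZ=\bF_s\diag(d\blambda^{s})\bD_s$, the diag identity applied to the resulting traces yields the $\diag(\bF_s\bR_{\mathrm{CSIM}}\bZ^{\H}\bR_{\mathrm{CSIM}}\bD_s)^{\H}$ structure. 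The main obstacle in the proof is precisely the bookkeeping inside $\nabla\tilde I_k$: the coupling between the sum over $i$ and the implicit dependence of $\bPsi_k$ on $\bphi_l$ (resp.\ $\blambda_s$) through $\bQ_k$ generates a large number of seemingly distinct trace terms that must be re-aggregated into the advertised matrices $\bar\bPsi_k$ and $\bPsi$; everything else is routine chain-rule calculation.
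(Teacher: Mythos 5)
Your proposal follows essentially the same route as the paper's proof: the quotient/chain rule for the log term, the matrix differentials $d\bQ_{k}=-\bQ_{k}(d\hat{\bR}_{k})\bQ_{k}$ and $d\bPsi_{k}$ combined with cyclic trace permutation to isolate $\bB_{k}$, the factorisations $d\bP=\bA_{l}\,d(\bPhi^{l})\,\bC_{l}$ and $d\bZ=\bF_{s}\,d(\bLambda^{s})\,\bD_{s}$, the extra scalar trace multiplier arising from the dependence of $\bR_{k}$ on $\bZ$ through $\tr(\bR_{\mathrm{CSIM}}\bZ\bR_{\mathrm{CSIM}}\bZ^{\H})$, and the $\tr(\bM\diag(\bx))=\diag(\bM)^{\T}\bx$ identity to read off the Wirtinger gradients with respect to $\bphi_{l}^{*}$ and $\blambda_{s}^{*}$. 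The intermediate reorganisation $d\tr(\bPsi_{k})=\tr(\bB_{k}\,d\hat{\bR}_{k})$ is a slightly tidier packaging than the paper's term-by-term substitution, but it is the same computation and yields the same closed forms.
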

	\begin{proof}
		Please see Appendix~\ref{proposition1}.	
	\end{proof}

	\subsection{Projection Operations of PGAM}
	The constraint $ |	\phi^{l}_{m}|=1 $  suggests that $ \phi^{l}_{m} $ should be located on the unit circle in the complex plane. Specifically, regarding the  vector $ \bar{\bu}_{l} $ of $ P_{\Phi_{l}}(\bu_{l}) $, we have
	\begin{align}
		\bar{u}_{l,m}=\left\{
		\begin{array}{ll}
			\frac{u_{l,m}}{|u_{l,m}|} & u_{l,m}\ne 0 \\
			e^{j \phi^{l}_{m}}, \phi^{l}_{m} \in [0, 2 \pi] &u_{l,m}=0 \\
		\end{array}, m=1, \ldots,M,
		\right.
	\end{align}
	where   $ \bu_{l} \in \mathbb{C}^{M \times 1} $ is a given point. The projection concerning  $ 	|\lambda_{s}^{n}|=1 $ follows the same lines. 
	
	\subsection{Convergence Analysis}
	In this section, we start with the presentation of the convergence of Algorithm \ref{Algoa1}. Specifically, given that the gradients $ \nabla_{\bphi_{l}}f(\bphi_{l},\blambda_{s}) $ and $ \nabla_{\blambda_{s}}f(\bphi_{l},\blambda_{s}) $ are Lipschitz continuous\footnote{A function $ f(\bx) $ is $ \mathcal{L} $-Lipschitz continuous, or else $ \mathcal{L}  $-smooth over a set $ \mathcal{X} $, if for all $ \bx,\by \in \mathcal{X} $, we have
		\begin{align}
			\|\nabla f(\by)-\nabla f(\bx)\|\le \mathcal{L} \|\by-\bx\|.
	\end{align}} because they consist of basic functions  as shown previously.
	In this case, we have \cite[Chapter 2]{Bertsekas1999}
	\begin{align}
		&	f(\bx,\by)\geq f(\bphi_{l},\blambda_{s})
		+\langle	\nabla_{\bphi_{l}}f(\bphi_{l},\blambda_{s}),\bx-\bphi_{l}\rangle	\nn\\
		&-\frac{1}{L_{\bphi_{l} }}\|\bx-\bphi_{l}\|^{2}_{2}
		+\langle\nabla_{\blambda_{s}}f(\bphi_{l},\blambda_{s}),\by-\blambda_{s}\rangle-\frac{1}{L_{\blambda_{s}}}\|\by-\blambda_{s}\|^{2}_{2}\nn\\
		&\geq f(\bphi_{l},\blambda_{s})
		+\langle	\nabla_{\bphi_{l}}f(\bphi_{l},\blambda_{s}),\bx-\bphi_{l}\rangle	\nn\\
		&-\frac{1}{L_{\max }}\|\bx-\bphi_{l}\|^{2}_{2}\nn
		+\langle\nabla_{\blambda_{s}}f(\bphi_{l},\blambda_{s}),\by-\blambda_{s}\rangle	\nn\\
		&-\frac{1}{L_{\max}}\|\by-\blambda_{s}\|^{2}_{2},		
	\end{align}
	where $L_{\max}=\max(L_{\bphi_{l} },L_{\blambda_{s}})$. Steps $ 4-10 $ of the algorithm demonstrate its termination in finite iterations because the condition in Step $ 10 $ must be fulfiled when $\mu_i <L_{\max}$. The line search results in $f(\bphi_{l}^{i+1},\blambda_{s}^{i+1})\geq f(\bphi_{l}^{i},\blambda_{s}^{i})$, which is an increasing sequence of objectives. Given that the sets $ \Phi_{l} $ and $ \Lambda_{s}$ are compact, $ f(\bphi_{l}^{i},\blambda_{s}^{i}) $ converges. It is worthwhile to mention that this is not necessarily an optimal solution because the problem is nonconvex, but a locally optimal solution.
	
	\subsection{Complexity Analysis}
	Herein, we elaborate on the complexity (number of complex multiplications)  per iteration of the proposed PGAM by using the big-$ \mathcal{O} $ notation, which is quite meaningful for large $ M $ and $ N $ used by the proposed architecture. In particular, for the computation of $ 
	\bP $ and $ 
	\bZ $, we require $ (L-1) M^{3}+M^{2}M_{\mathrm{BS}}$ and  $ (S-1) N^{3}+N^{2}$, multiplications, respectively.  The computation of $ 	\bR_{k} $, which includes a trace over $ \bR_{\mathrm{CSIM}}\bZ {\bR}_{\mathrm{CSIM}}\bZ^{\H} $, requires $ M^{2}+N^{2} $ multiplications. Note that the former $M^{2}  $ multiplications are required for the multiplication between the trace and $ \bR_{\mathrm{BSIM}} $. The complexity of $ \hat{\bR}_{k} $ is $ \mathcal{O}(M^{3}+M^{2}) $. From \cite{Papazafeiropoulos2023}, it can be observed that the complexity of $ 	\bPsi_{k} $, which includes the inverse matrix $ 	\bQ_{k} $, can be limited to $ \mathcal{O}(M^{2}) $. Hence, the complexity to compute the objective function is $ \mathcal{O}(K((L-1) M^{3}+(S-1) N^{3}+M^{2}M_{\mathrm{BS}}+N^{2})) $.
	
	Next, we provide the complexity for the computation of the gradients $ 	\nabla_{\bphi_{l}}f(\bphi_{l},\blambda_{s}) $ and $ 	\nabla_{\blambda_{s}}f(\bphi_{l},\blambda_{s}) $. Each of the   gradients $ 	\nabla_{\bphi_{l}}S_{k}  $ and $ 	\nabla_{\bphi_{l}}\tilde{I}_{k} $ require $ M^{2}+N^{2} $  multiplications. A similar complexity is exhibited by $ \nabla_{\blambda_{s}}S_{k} $ and $\nabla_{\bphi_{l}}\tilde{I}_{k}  $. Thus, the total complexity for obtaining the gradients is $  \mathcal{O}( K(M^{2}+N^{2}))$.
	
	\section{Numerical Results}\label{Numerical}
	In this section, we present the evaluation of the achievable sum SE under the proposed  algorithm optimizing simultaneously both BSIM and CSIM in terms of analytical results and Monte Carlo simulations.
	
	\subsection{Simulation Setup}
	We consider the uplink of a  BSIM and CSIM-assisted mMIMO system, where the BS, equipped with a uniform linear array (ULA), consists of $ M_{\mathrm{BS}}=32 $ antennas that are positioned along the $ x$-axis. The BSIM  is integrated with the BS to apply to receive beamforming in the EM wave domain. The BSIM is parallel to the $ x-y $ plane, and its center aligns with the $ z-$axis at a height $ H_{\mathrm{BS}}=10~\mathrm{m} $. The location of the CSIM is at  $(x_{\mathrm{CSIM}},~ y_{\mathrm{CSIM}})=(50,~ 10)$. Also, for the sake of simplicity, we assume that all surfaces are isomorphic and square, which means that the same number of meta-atoms are located along the $ x-$ and $ y-$axes, denoted by $ M_{x} $ and $ M_{y} $, respectively. Note that the spacing  between adjacent antennas/meta-atoms for the BS and metasurfaces is assumed to be $ \lambda/2 $. The size of each meta-atom for both SIMs is $ \lambda/2 \times \lambda /2 $. Each BS antenna has a gain of $ 5 \mathrm{dBi} $ while each user has a single antenna with a gain of $ 0 \mathrm{dBi} $ \textcolor{black}{\cite{Papazafeiropoulos2021,An2023,An2023b}}. The thickness of both SIMs is $ T_{\mathrm{SIM}}=5 \lambda $, while the spacings for the BSIM and CSIM are $ d_{\mathrm{BSIM}}= T_{\mathrm{SIM}}/L$ and $ d_{\mathrm{CSIM}}= T_{\mathrm{SIM}}/S$, respectively. The users  are located on a straight line between $(x_{\mathrm{CSIM}}-\frac{1}{2}d_0,~y_{\mathrm{CSIM}}-\frac{1}{2}d_0)$ and $(x_{\mathrm{CSIM}}+\frac{1}{2}d_0,~y_{\mathrm{CSIM}}-\frac{1}{2}d_0)$ with equal distances between each two adjacent users, and $d_0 = 20$~m.
	
	The  spatial attenuation coefficients $ 	w_{m,\tilde{m}}^{l} $ and $ 	u_{\tilde{n},n}^{s} $ between adjacent metasurface layers in BSIM and CSIM are obtained by \eqref{deviationTransmitter} and \eqref{deviationReceiver}, respectively. The distance between the $ \tilde{m}-$th meta-atom of the $ (l-1)-$st metasurface and the  $ {m}-$th meta-atom of the $ l-$st metasurface is given by $ d_{m,\tilde{m}}^{l}=\sqrt{d_{\mathrm{BSIM}}^{2}+d_{m,\tilde{m}}^{2}} $, where
	\begin{align}
		\!\!	d_{m,\tilde{m}}\!=\!\frac{\lambda}{2}\sqrt{\lfloor | m-\tilde{m}|/M_{x}\rfloor^{2}+[ \mathrm{mod}(|m-\tilde{m}|,M_{x})]^{2}}.
	\end{align}
	Moreover, the transmission distance between the $ m $-th antenna and the $ \tilde{m} $-th meta-atom on the first metasurface layer is provided by \eqref{dist1}. Note that we have $ \cos x_{m,\tilde{m}}^{l}= d_{\mathrm{BSIM}}/ d_{m,\tilde{m}}^{l}, \forall l \in \mathcal{L}$.
	
	\begin{figure*}
		\begin{align}
			{\small 	d_{\tilde{m},m}^{1}\!=\!\sqrt{\!d_{\mathrm{BSIM}}^{2}\!+\!\Big[\!\Big(\!\mathrm{mod}(\tilde{m}\!-\!1, M_{x})\!-\!\frac{M_{x}\!-\!1}{2}\!\Big)\frac{\lambda}{2}\!-\!\Big(m\!-\!\frac{M_{\mathrm{BS}}\!+\!1}{2}\Big)\frac{\lambda}{2}\Big]^{2}\!+\!\Big(\!\lceil \tilde{m}/M_{x} \rceil\!-\!\frac{M_{y}\!+\!1}{2}\Big)^{2}\frac{\lambda_{2}}{4}}}.\label{dist1}
		\end{align}
		\hrulefill
	\end{figure*}
	Similarly, the distance between the $ \tilde{n}-$th meta-atom of the $ (s-1)-$st metasurface and the  $ {n}-$th meta-atom of the $ s-$st metasurface is given by $ d_{n,\tilde{n}}^{s}=\sqrt{d_{\mathrm{CSIM}}^{2}+d_{n,\tilde{n}}^{2}} $, where
	\begin{align}
		d_{n,\tilde{n}}=\frac{\lambda}{2}\sqrt{\lfloor | n-\tilde{n}|/N_{x}\rfloor^{2}+[ \mathrm{mod}(|n-\tilde{n}|,N_{x})]^{2}}.
	\end{align}
	where $ N_{x} $ and $ N_{y} $ are the numbers of meta-atoms along the $ x-$ and $ y-$axes of the CSIM.

	The  path-loss is distance-dependent, and is given by 
	\begin{align}
		\tilde \beta_g = C_{0} (d_g/\hat{d})^{-\alpha_{g}},
	\end{align} 
	where $ C_{0} =(\lambda_{2}/4 \pi \hat{d})$ is  the free space path loss at the reference distance of $ \hat{d}=1~ \mathrm{m}$, $ d_g  $ is the distance between the BSIM and the CSIM.  Regarding $ \tilde{ \beta}_{k} $ and $ \bar{ \beta}_{k} $,  we assume that the corresponding path-loss exponents are equal to $ 2.8 $ and $ 3.5 $, respectively, while  $ \tilde{ d}_{k} $ and $ \bar{ d}_{k} $ are the corresponding distances in the place of $ d_g $ \textcolor{black}{\cite{Papazafeiropoulos2021,An2023,An2023b}.  Notably, in Fig. \ref{fig3}.(a) below, we have used the path-loss exponents $2.29$ and $1.88$ from \cite[Table VI]{Sang2024}, which come from measurements of RIS channels.} The correlation matrices $ \bR_{\mathrm{BSIM}}$ and $\bR_{\mathrm{CSIM}} $ are  calculated according to \eqref{corB} and \eqref{corC}, respectively.  We assume that the  carrier frequency and the system bandwidth are $ 2~\mathrm{GHz} $ and $ 20~\mathrm{MHz} $, respectively. The coherence bandwidth is asummed to be $ B_{\mathrm{c}}  = 200~\mathrm{KHz}$, and the coherence interval is $  \tau_{\mathrm{c}} = 200~\mathrm{symbols} $, given  a coherence time equal to $ T_{\mathrm{c}} = 1~\mathrm{ms} $. Also, we assume
	$ \tau = K $ orthonormal pilot sequences. The uplink SNR for both training and data transmission phases is $ 6~\mathrm{dB}$. Furthermore, we assume  $ K=4 $, $ M=N=100 $, and  $ L=S=4 $.

	\subsection{Convergence of PGAM}
	In Fig. \ref{fig2}.(a), we show the convergence of the proposed algorithm and its superiority for various sets of  meta-atoms per metasurface against the AO method, where the BSIM and CSIM are optimized independently in an alternating way. Termination of the Algorithm \ref{Algoa1} takes place when
	the difference of the objective between the two last iterations
	is less than $10^{-5} $ or the number of iterations is larger than
	$ 100 $. We consider three cases, which are $ M=N=100 $,  $ M=100,N=49 $, and $ M=N=49 $. In all cases, the PGAM converges to its maximum as the number of iterations increases. We observe that an increase in the numbers of meta-atoms requires a higher number of iterations for convergence due to expanded search space since now the corresponding matrices are larger. Also, the increases in the numbers of meta-atoms result in higher complexity of Algorithm \ref{Algoa1}. As $ M, N $ increase both methods approach the optimum SE slower. Moreover, it is shown that the PGAM converges quicker to the optimum SE compared to the AO method, which requires more iterations to reach convergence. Note that the SE for the AO is higher at the beginning than the PGAM because the starting point for the phase shifts leads to higher SE, while, for the PGAM, the initial rate is a result of the initial values of the phase shifts as described by the algorithm.

	The nonconvexity of Problem $ (\mathcal{P}) $  does not allow PGAM to guarantee a stationary solution, which is not necessarily optimal. Hence,  Algorithm \ref{Algoa1} may converge to different points starting from different initial points.  In other words, to mitigate this performance  sensitivity of Algorithm \ref{Algoa1} on the initial points, 
	it is desirable to select the best   solutions after executing the algorithm from different initial points.  For this reason, in Fig. \ref{fig2}.(b), we illustrate the sum SE versus  the iteration count obtained by Algorithm \ref{Algoa1} for $ 5 $ different randomly generated initial points. As can be seen, all $ 5 $ randomly generated initial points result in the same SE. Generally, it is good to follow this observation and execute Algorithm $ 1 $ for $ 5 $ randomly generated initial points to allow a good trade-off between performance and complexity.
	
	\begin{figure}%
		\centering
		\subfigure[\textcolor{black}{Varying sets of numbers meta-atoms per metasurface.}]{	\includegraphics[width=0.9\linewidth]{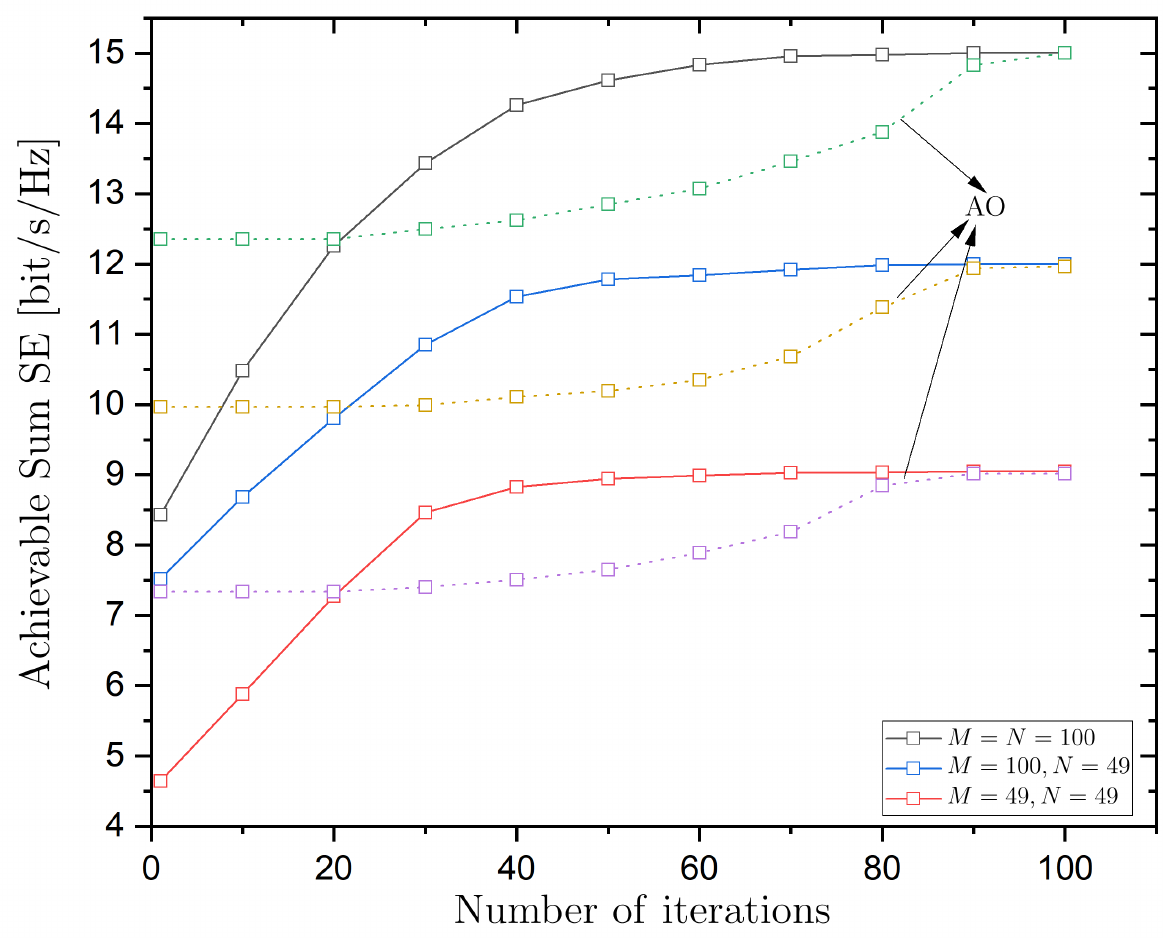}}\qquad
		\subfigure[\textcolor{black}{For   $ 5 $ different randomly generated initial points.} ]{	\includegraphics[width=0.9\linewidth]{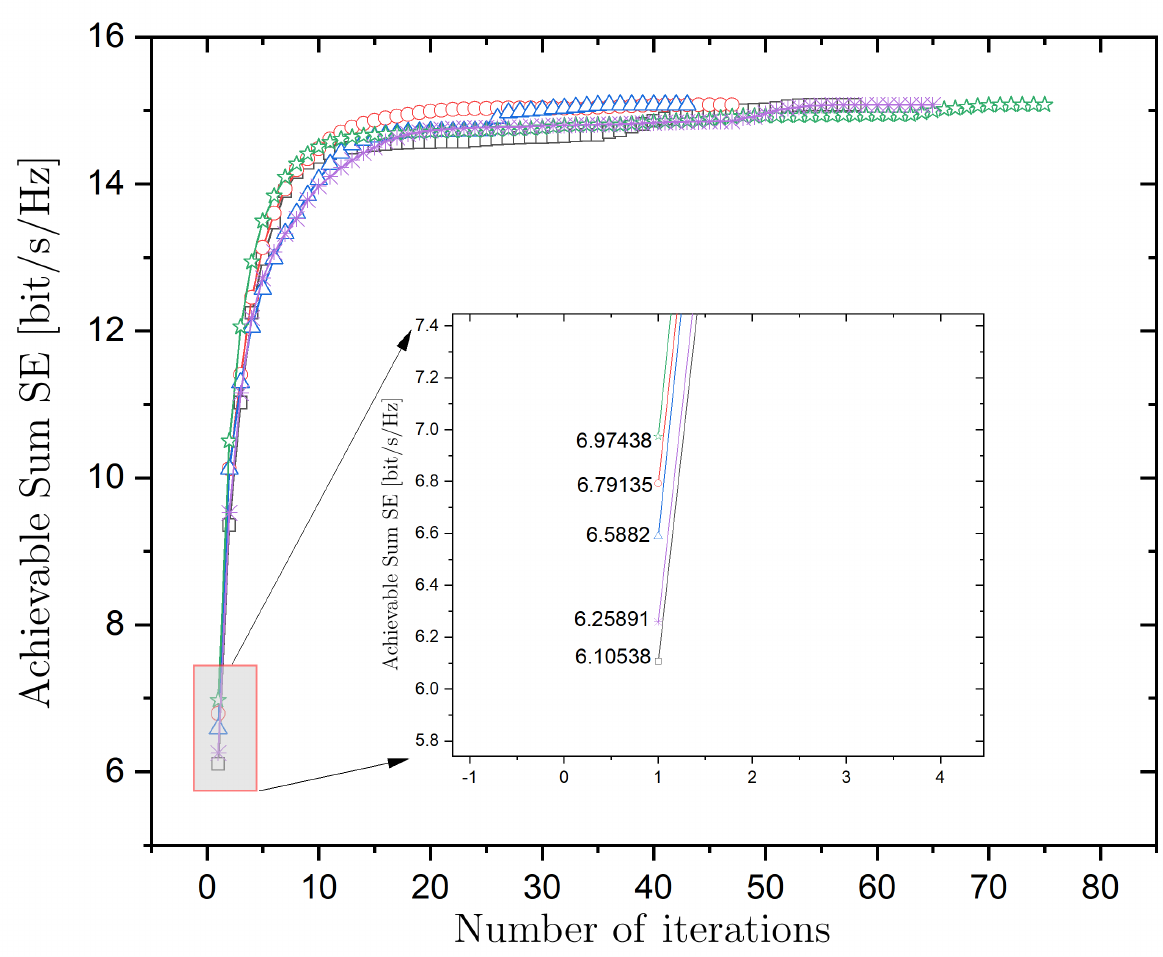}}\\
		\caption{\textcolor{black}{Achievable sum SE of the proposed double-SIM mMIMO architecture  versus the number of iterations}}
		\label{fig2}
	\end{figure}
	\subsection{Channel Estimation Evaluation}
	\textcolor{black}{	The evaluation of the normalized MSE (NMSE) is quite insightful to assess the impact of BSIM and CSIM on the channel estimation. Specifically, we define
		\begin{align}
			\mathrm{NMSE}_{k}&=\frac{\tr(\EE[(\hat{\bc}_{k}-{\bc}_{k})(\hat{\bc}_{k}-{\bc}_{k})^{\H}])}{\tr\left(\EE[{\bc}_{k}{\bc}_{k}^{\H}]\right)}\\
			&=1-\frac{\tr(\bPsi_{k})}{\tr\big(\hat{\bR}_{k}\big)}.\label{nmse1}
	\end{align}}
	
	\textcolor{black}{	Fig. \ref{fig05} depicts the normalized mean square error (NMSE) with respect to the uplink SNR for varying $M$ while $N$ is fixed, and for varying $N$ while $M$ is fixed to study the impact of BCSIM and CSIM, respectively. We observe floors as the SNR increases due to imperfect CSI. Moreover, the variation in the number of elements of CSIM results in a greater decrease in the NMSE. }

	\begin{figure}[!h]
		\begin{center}
			\includegraphics[width=0.9\linewidth]{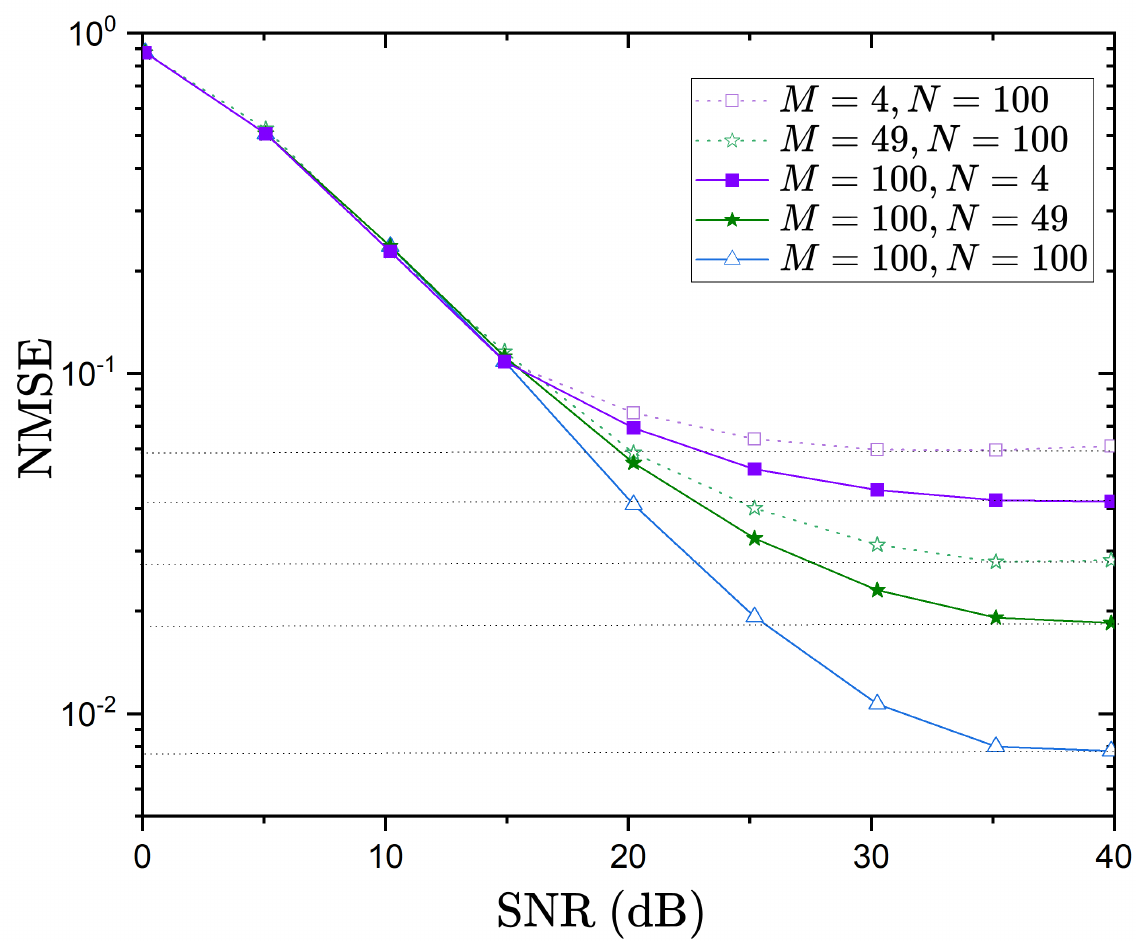}
			\caption{Impact of BCIM and CSIM on the NMSE of UE $ k $ versus the SNR.}
			\label{fig05}
		\end{center}
	\end{figure}
	
	\subsection{Sum SE Performance Evaluation}
	First,  Figs. \ref{fig3}.(a) and \ref{fig3}.(b)  illustrate the sum SE versus the numbers of meta-atoms $ M $  and $ N $ on each metasurface layer of the BSIM and CSIM, respectively.  In both figures, we observe an increase in the sum SE with the sizes of the corresponding surfaces. However, the impact of the CSIM is greater since a double increase in  $N  $ results in $ 170 \% $ improvement compared to $ 200 \% $ improvement when $ M $ becomes almost double (from $ 40 $ to $ 100 $). Also, in  Fig.\ref{fig3}.(b), we observe the impact of the CSIM. In particular, we have shown the SE when the CSIM is absent, and only the direct signal exists. As can be seen, the performance is quite low. 
	
	\begin{figure}%
		\centering
		\subfigure[\textcolor{black}{Versus the number of meta-atoms of the BSIM $ M $ while varying the number of meta-atoms of the CSIM $ N $.}]{	\includegraphics[width=0.9\linewidth]{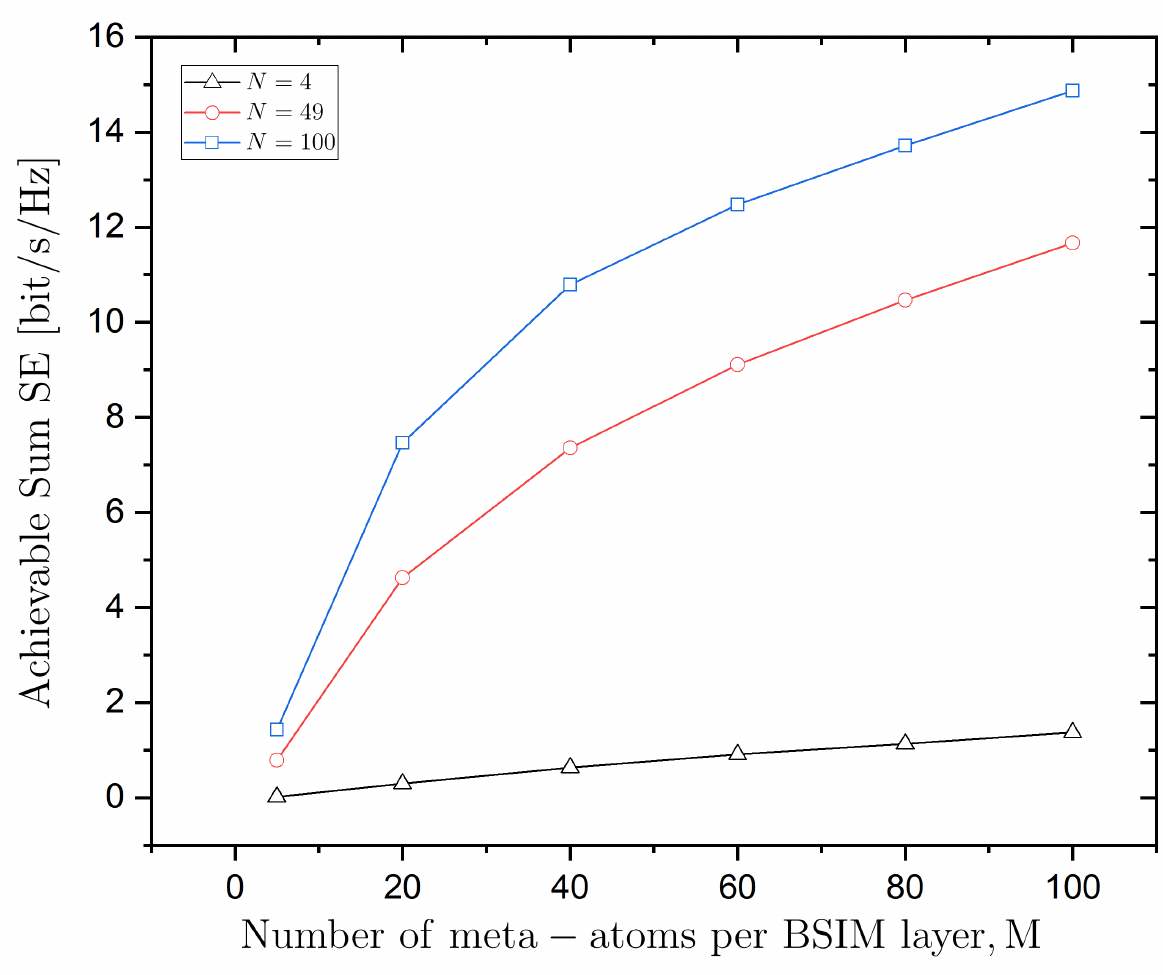}}\qquad
		\subfigure[\textcolor{black}{Versus the number of meta-atoms of the CSIM $ N $ while varying the number of meta-atoms of the BSIM $ M $.}]{	\includegraphics[width=0.9\linewidth]{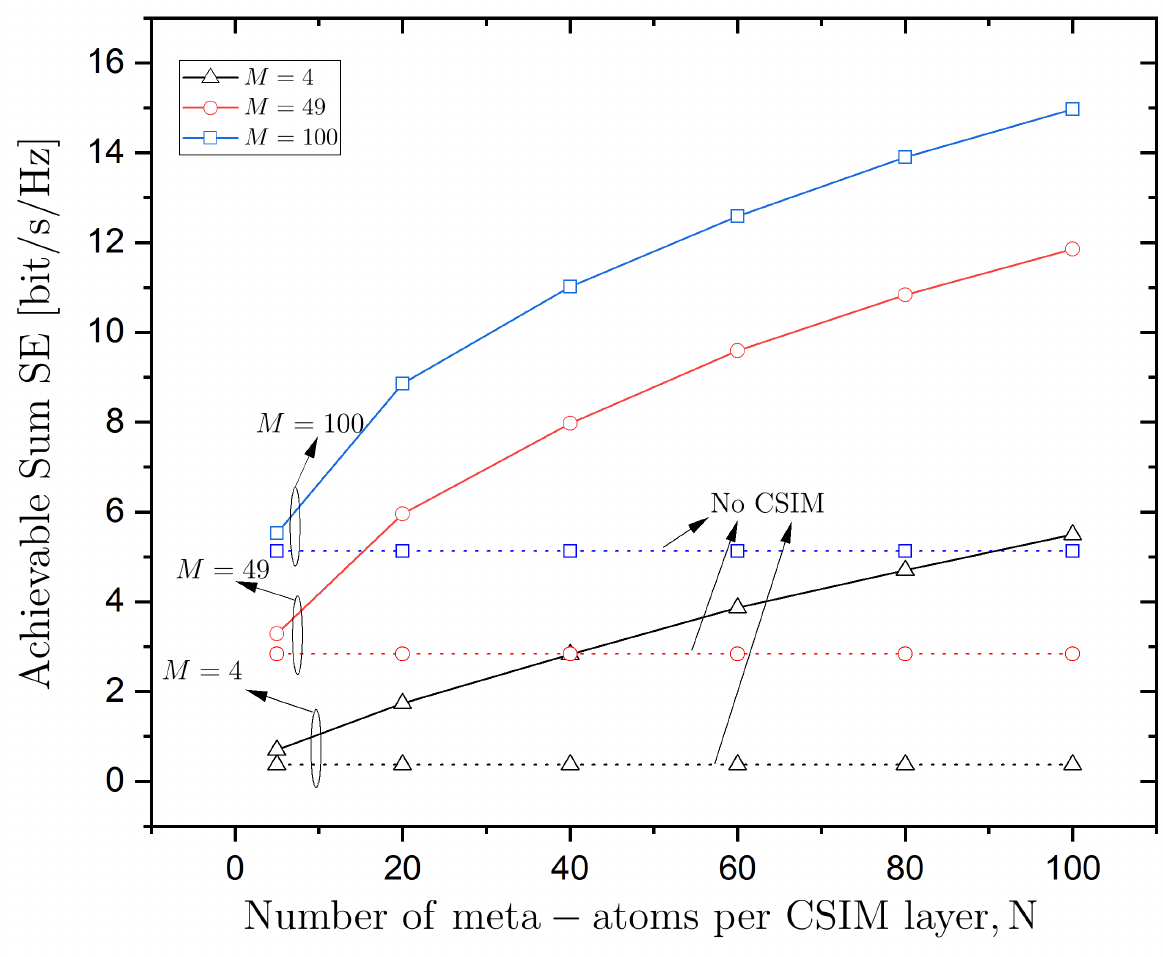}}\\
		\caption{\textcolor{black}{Achievable sum SE of the proposed double-SIM mMIMO architecture.}}
		\label{fig3}
	\end{figure}

	\textcolor{black}{	Figs. \ref{fig4}.(a) and \ref{fig4}.(b)  depict the sum SE versus the numbers of layers $ L $  and $ S $  of the BSIM and CSIM, respectively. In Fig. \ref{fig4}.(a), we observe that the sum SE improves until $ L=4 $ because of the ability of the  BSIM to mitigate the inter-user interference in the EM wave domain. In Fig. \ref{fig4}.(b), it is shown that the sum SE increases with the number of metasurfaces $ S $ due to increasing array and multiplexing gains. In both figures, a significant improvement is observed compared
		to the single-layer BSIM and CSIM, respectively. However, this improvement in the BSIM is greater.}
	
	\begin{figure}%
		\centering
		\subfigure[\textcolor{black}{Versus the number of metasurfaces of the BSIM $ L $ while varying the number of metasurfaces of the CSIM $ S $.}]{	\includegraphics[width=0.9\linewidth]{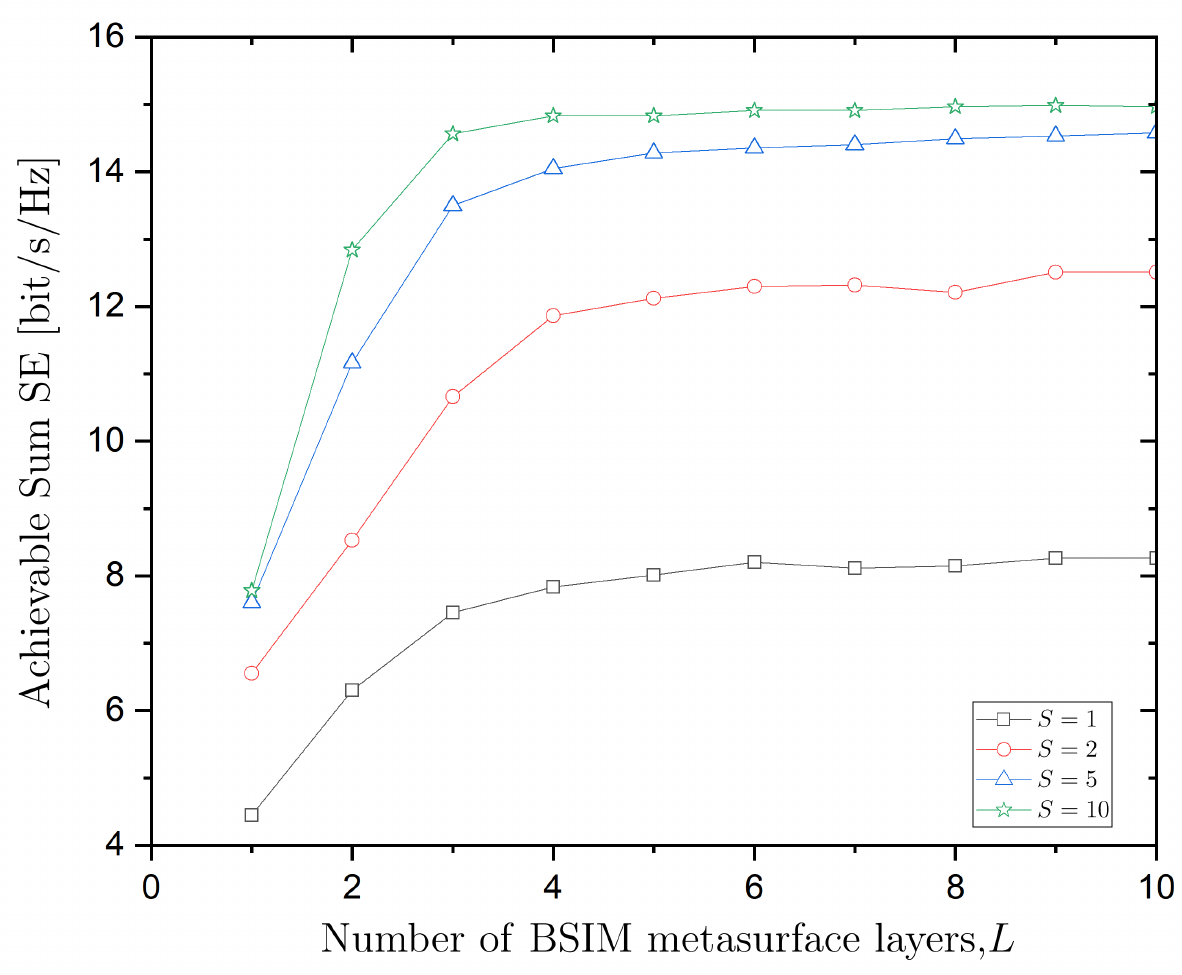}}\qquad
		\subfigure[\textcolor{black}{Versus the number of metasurfaces of the CSIM $ S $ while varying the number of metasurfaces of the BSIM $ L $.}]{	\includegraphics[width=0.9\linewidth]{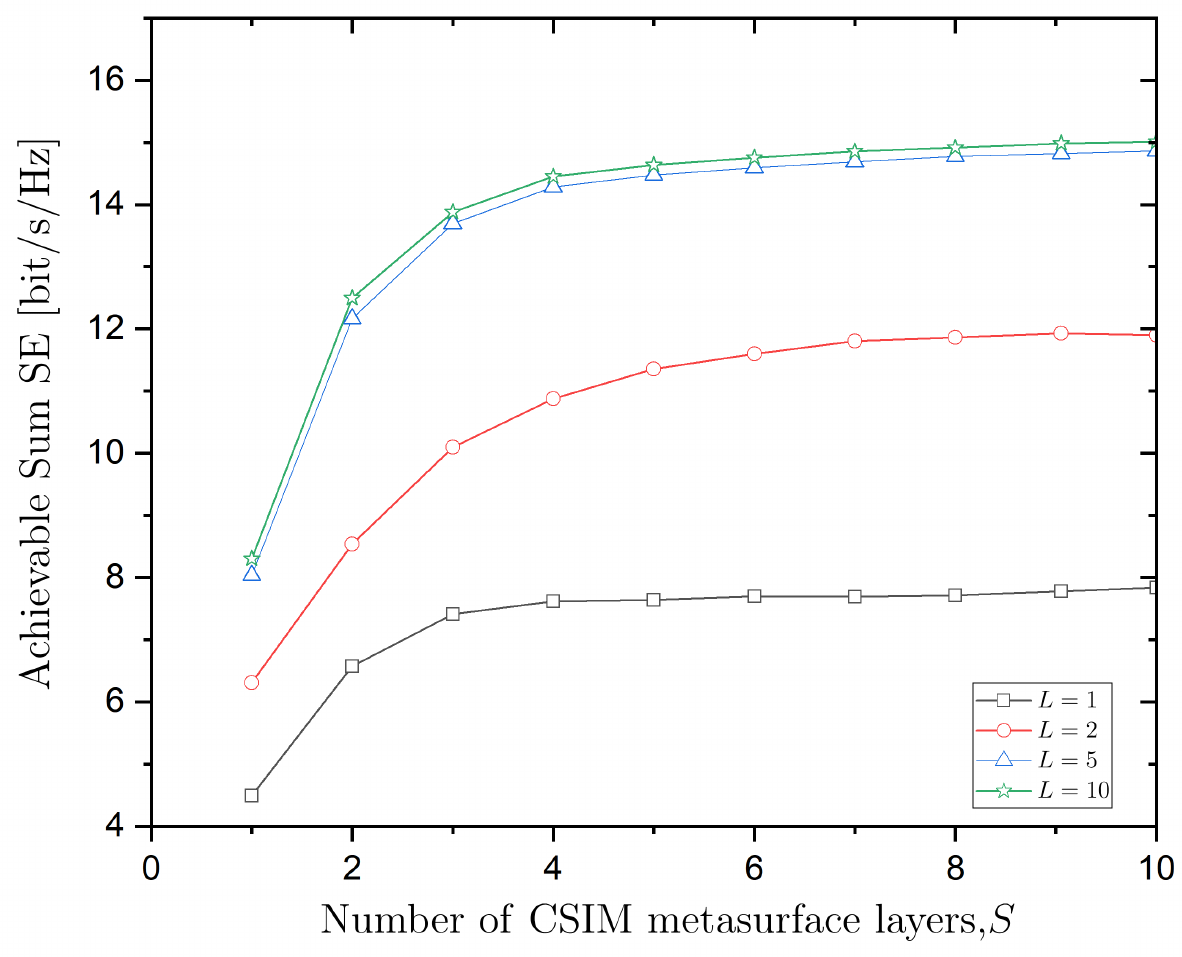}}\\
		\caption{\textcolor{black}{Achievable sum SE of the proposed double-SIM mMIMO architecture.}}
		\label{fig4}
	\end{figure}

	In Fig. \ref{fig7}.(a), we present a comparison  between the BCIM and CSIM-assisted mMIMO system with the case where the BSIM is replaced by a conventional BS that employs a conventional MRC decoder. 	 For the sake of a fair comparison, we assume that the path loss corresponding to user $ k $ is $ d_{k}=\sqrt{H_{\mathrm{BS}^{2}}+(d_{0}(k-1))^{2}} $. The conventional BS requires $M_{\mathrm{BS}}=46  $ antennas to exhibit better performance than the BSIM with $M_{\mathrm{BS}}=32  $ and $ M=49 $ elements. \textcolor{black}{When the number of BSIM elements increase to $ M=100 $, the conventional BS requires $M_{\mathrm{BS}}=81  $ antennas to exhibit better performance.} 
	\textcolor{black}{In Fig. \ref{fig7}.(b), we illustrate the same comparison in the case of the ZF decoder for mitigating the multiuser interference. In this case, the conventional BS requires $M_{\mathrm{BS}}=52  $ antennas to exhibit better performance than the BSIM with $M_{\mathrm{BS}}=32  $ and $ M=49 $ elements. When the number of BSIM elements increases to $ M=100 $, the conventional BS requires $M_{\mathrm{BS}}=89  $ antennas to exhibit better performance.}

	\textcolor{black}{Also, in Fig. \ref{fig7}.(a),  we have plotted the fully wave-based SIM architecture  to show a comparison with the proposed hybrid BSIM architecture in the case of $M_{\mathrm{BS}}=K=8  $. As can be seen, 	 the hybrid BSIM architecture results in a lower performance compared to the fully wave-based BSIM while employing the same number of BS antennas and BSIM elements per surface. To meet the SE  of the fully wave-based BSIM, we have to increase the number of elements of the hybrid BSIM, or else,  the hybrid BSIM achieves a reduction of the metasurface size by making a small performance compromise.	In other words, we can remove partially the digital decoding at the BS while equipping the BS with a hybrid SIM to reduce the size of the fully wave-based SIM. In parallel, the hybrid SIM achieves to deploy fewer antennas at the BS compared to the conventional digital BS, which reduces overall the energy consumption and hardware cost. \textcolor{black}{For example, in Fig. \ref{fig7}.(b), i.e., in the case of ZF, a conventional BS requires $53$ antennas to achieve the same performance with the proposed scheme with $32$ antennas and $49$ elements, while it requires $89$ antennas compared to the proposed architecture, which requires only $32$ antennas and $100$ elements.} \textcolor{black}{The study of the energy consumption in this case could be a very interesting topic for future work.}}

	\begin{figure}%
		\centering
		\subfigure[MRC.]{	\includegraphics[width=0.9\linewidth]{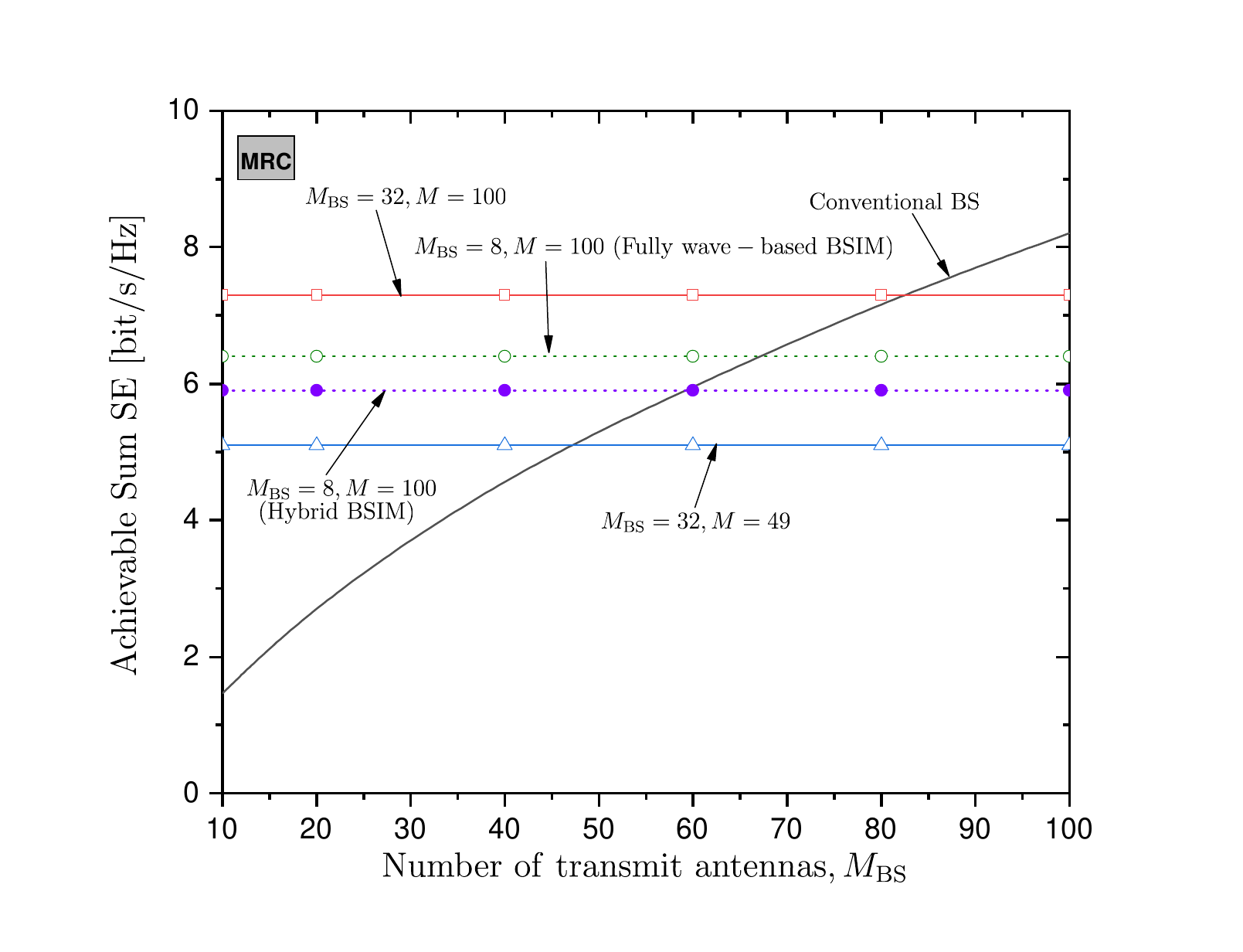}}\qquad
		\subfigure[ZF.]{	\includegraphics[width=0.9\linewidth]{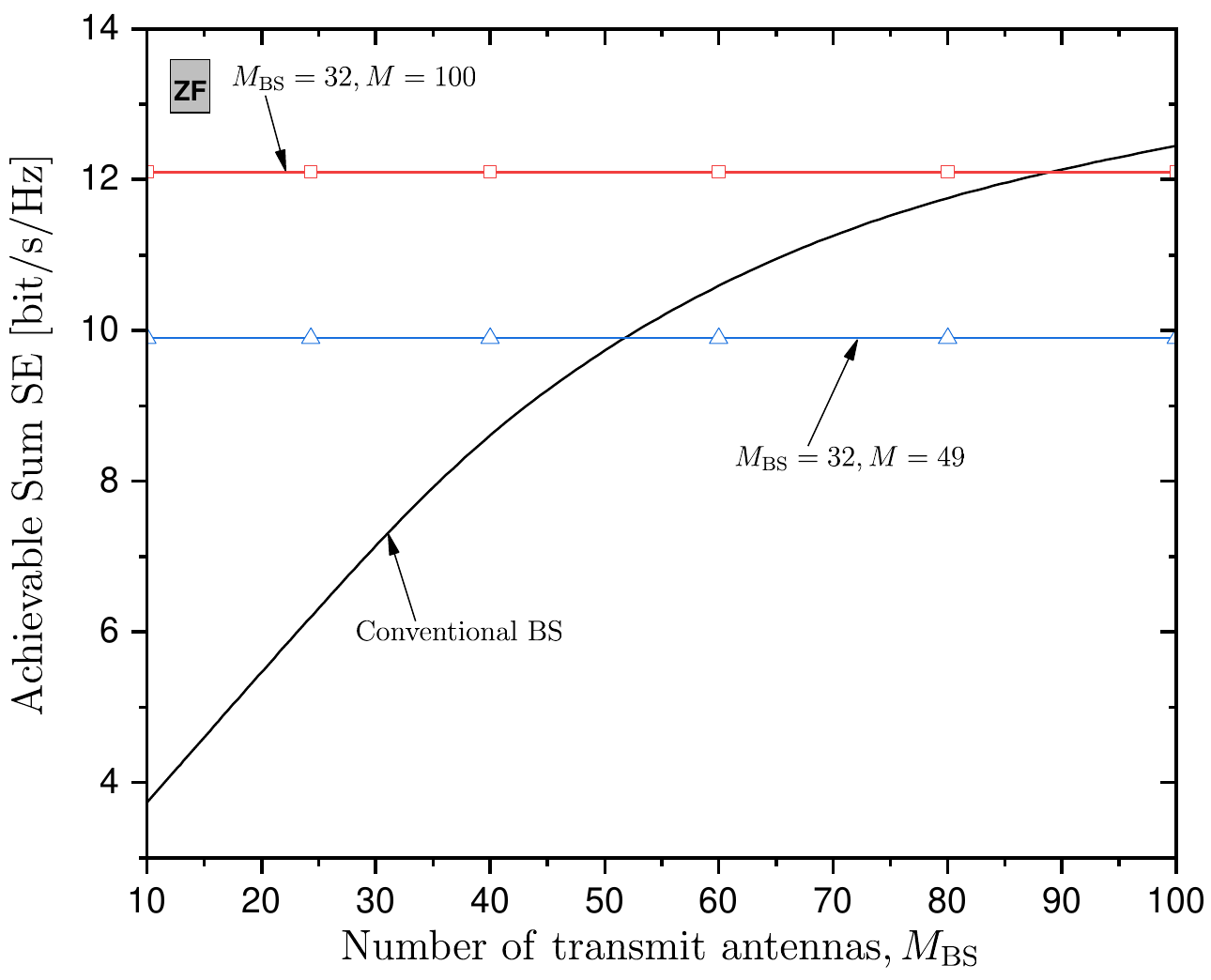}}\\
		\caption{Achievable sum SE of the proposed double-SIM mMIMO architecture versus the number of BS antennas $ M_{\mathrm{BS}} $ while varying the number of receiver metasurface layers $ M $.}
		\label{fig7}
	\end{figure}

	In Fig. \textcolor{black}{\ref{fig5}}, we show the sum SE versus the number of users $ K $.  We observe that the sum SE grows with $ K $ at the beginning but decreases after a certain value of $ K $. The reason behind this observation is that the increase in the beginning comes from the increase of the spatial multiplexing gain with $ K $. Later, the increase of inter-user interference due to increasing $ K $  cannot be mitigated by the BSIM and its wave-based decoding. Based on our setup, the BSIM can mitigate the multiuser interference of  maximum  $ K=5 $ users in the case of MRC. \textcolor{black}{In the same figure, we have plotted the performance with zero-forcing (ZF) (dotted lines) at the BS contrary to MRC (solid lines). It is shown that with ZF, BSIM can serve more users. \textcolor{black}{Moreover,  we observe that an increase in the number of meta-atoms $ M $  of the BSIM does not change much the optimal number of users in the case of MRC, while the change is more significant in the case of ZF.}}  \textcolor{black}{MRC requires more elements than ZF to achieve the same sum SE.  Specifically,	MRC requires 49 elements compared  to 4 elements required  by ZF. The reason is that ZF cancels out the inter-user interference. For a similar reason, the ZF scenario can accommodate a higher number of optimal users, i.e., in the case of ZF, the optimal number of users is 6, while the optimal number of users in the case of MRC is 5.}
	
	\textcolor{black}{	In the same figure, we have added two lines corresponding to the performance of a conventional BS in the cases of MRC and ZF. In the former case, the line corresponds to a conventional BS with $49$ antennas that exhibit the same performance as the proposed architecture with 32 antennas and 49 elements. In the latter case, the  line corresponds to a conventional BS with $53$ antennas  that exhibit the same performance as the proposed architecture with 32 antennas and 49 elements. As expected, in both cases, we observe  that the performance increases with the number of users but at a lower rate as the number of users increases due to multi-user interference. Especially, the case of ZF exhibits better performance. The notable observation compared to the proposed architecture is that the performance of the conventional BS increases with $K$ in both cases. The reason is that the sum SE of the proposed architecture depends on two factors, which are 	the spatial multiplexing gain brought by an increased number of users as well as the residual interference level after applying wave-based beamforming \cite{An2023b}, while the conventional case enjoys fully the spatial multiplexing gain.}

	\begin{figure}[!h]
		\begin{center}
			\includegraphics[width=0.9\linewidth]{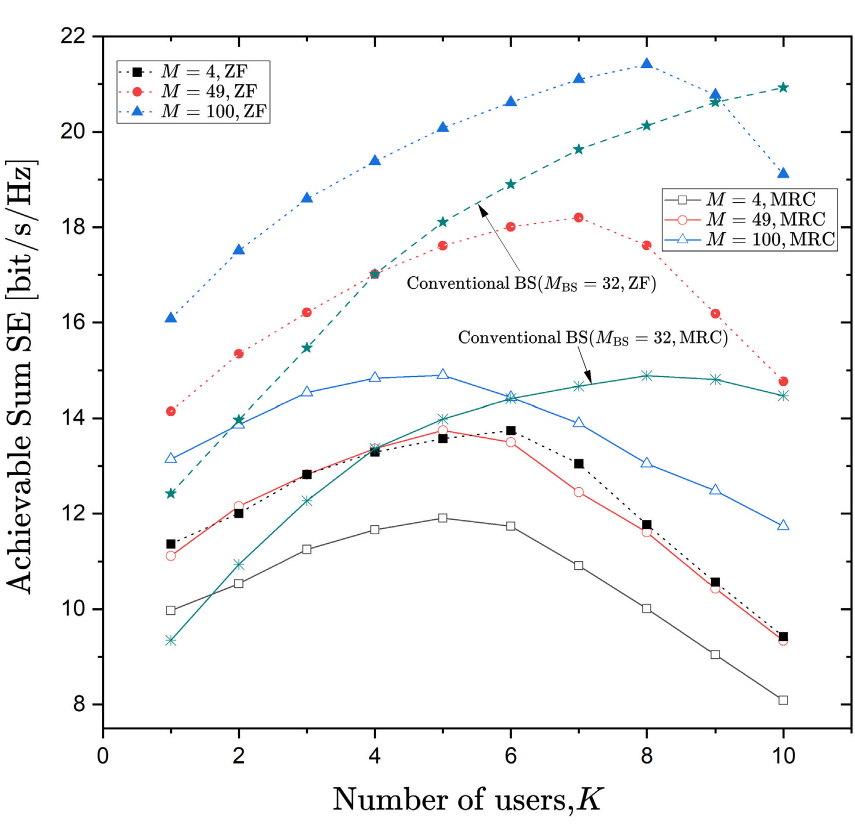}
			\caption{Achievable sum SE of the proposed double-SIM mMIMO architecture versus the number of users $ K $ while varying the number of BSIM meta-atoms $ M $ with MRC and ZF at the BS.}
			\label{fig5}
		\end{center}
	\end{figure}
	
	Fig. \ref{fig6} depicts the sum SE versus the transmit SNR for various cases regarding the values of the phase shifts of the BSIM. 
	The optimal phase shifts design achieves the best performance as expected. The designs with random phase shifts and equal phase shifts follow with lower performance. \textcolor{black}{In the same figure, we have plotted the achievable sum SE in the case of perfect CSI, where it is shown the performance gap compared to imperfect CSI, \textcolor{black}{which is obtained by computing the covariance matrices given by \eqref{Psiexpress} and \eqref{Psiexpress1}.} We observe that the gap increases with increasing SNR.} \textcolor{black}{Moreover, we have simulated the cases of solely a BSIM and solely a CSIM. In particular, in the case of solely a  CSIM, a conventional digital BS has been assumed, while in the case of solely a BSIM, no SIM has been assumed in the intermediate space between the BS and the users. As can be seen, the ``no BSIM'' scenario performs better since a fully digital BS is assumed. However, the ``no CSIM'' setting performs worse since the beamforming gain from the CSIM is missing.} \textcolor{black}{Another comparison, shown in this figure, concerns the performance gap between instantaneous and statistical CSI. The performance loss compared to instantaneous CSI is compromised by  saving significant overhead in the case of statistical CSI. Note  that previous studies on SIMs, which are based on instantaneous CSI changing at each coherence interval, might not be feasible in practice due to inherent large overheads.} \textcolor{black}{In the same figure,  we have depicted the sum SE corresponding to path loss exponents  $2.29$ and 	$1.88$ from \cite[Table VI]{Sang2024}, which come from measurements of RIS channels. This line presents similar performance but the sum SE is higher compared to the scenario, where the path loss exponents are $2.8$ and $ 3.5$, respectively.} In parallel, Monte Carlo simulations verify the analytical results. 
	
	\begin{figure}[!h]
		\begin{center}
			\includegraphics[width=0.9\linewidth]{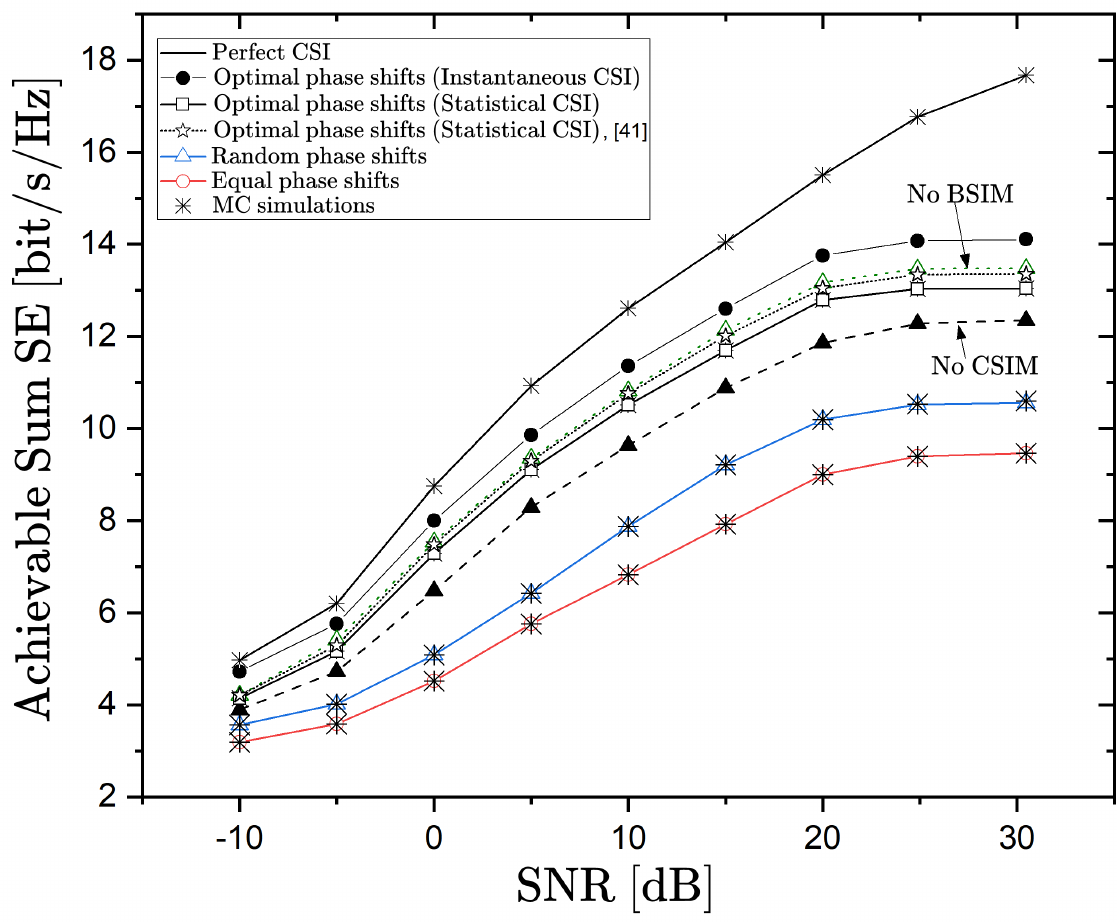}
			\caption{Achievable sum SE of the proposed double-SIM mMIMO architecture versus the SNR for different cases regarding the phase shifts, with perfect CSI, and with instantaneous CSI.}
			\label{fig6}
		\end{center}
	\end{figure}

	\section{Conclusion} \label{Conclusion} 
	In this paper, a novel SIM-assisted mMIMO architecture was proposed. In particular, a SIM enables hybrid wave-based decoding in the uplink, and another SIM is used at the intermediate space between the users and the BS to increase further the ability of the proposed configuration to reshape the surrounding EM environment. We first suggested a channel estimation scheme, which obtains the estimated channel in closed form in a single phase. Then, we derived the uplink sum SE, and optimized both SIMs simultaneously. Simulations showed the superiority of the proposed architecture compared to counterparts, demonstrated the impact of various parameters on the sum SE, and exhibited outperformance against the AO method,  where the SIMs are optimized separately. For example, an increase in the numbers of layers and meta-atoms increases the performance. Especially, the BSIM exhibits interference cancellation. Also, the proposed architecture is superior to the conventional digital MISO counterpart, which means that SIM-based designs could lead to next-generation energy-efficient networks. \textcolor{black}{Notably, an interesting topic for future work could be the mitigation of multiplicative fading caused by a multi-layer CSIM by proposing an active SIM.}
	
	\begin{appendices}
		\section{Proof of Lemma~\ref{LemmaDirectChannel}}\label{Lemma1}
		The LMMSE estimator of $ \bc_{k}=\bW^{1^{\H}}\bP^{H}\bh_{k} $ is derived according to \cite[Ch. 12]{Kay}  as
		\begin{align}
			\hat{\bc}_{k}=\EE\{\br_{k}\textcolor{black}{\bc_{k}}^{\H}\}\left(\EE\{\br_{k}\br_{k}^{\H}\}\right)^{-1}.\label{Cor6}
		\end{align}
		The first term of \eqref{Cor6} is written as
		\begin{align}
			&	\EE\{\br_{k}\bc_{k}^{\H}\}=\EE\big[\big(\bW^{1^{\H}}\bP^{H}\bc_{k}+\frac{\bz_{k}}{ \tau \rho}\big)\bc_{k}^{\H}\bW^{1}\bP\big]\label{Cor00}\\
			&=\bW^{1^{\H}}\bP^{H}\EE\{\bc_{k}\bc_{k}^{\H}\}\bW^{1}\bP=\bW^{1^{\H}}\bP^{H}\bR_{k}\bW^{1}\bP,\label{Cor0}
		\end{align}
		where, \textcolor{black}{in \eqref{Cor00}, we have accounted for the independence between $\bz_{k}$ and $\bc_{k}$}.
		
		The second term is obtained as
		\begin{align}
			\EE\{\br_{k}\br_{k}^{\H}\}&=\bW^{1^{\H}}\bP^{H}\bR_{k}\bW^{1}\bP+\frac{1}{ \tau \rho }\Id_{M}.\label{Cor1}
		\end{align}
		
		By inserting \eqref{Cor0} and \eqref{Cor1} into \eqref{Cor6}, we obtain 
		\begin{align}
			\hat{\bc}_{k}=\hat{\bR}_{k}\bQ_{k}\br_{k},
		\end{align}
		where $ \hat{\bR}_{k}=\bW^{1^{\H}}\bP^{H}\bR_{k}\bW^{1}\bP $ and $ \bQ_{k}\!=\! \left(\!\hat{\bR}_{k}\!+\!\frac{1}{ \tau \rho }\Id_{M}\!\right)^{\!-1}$.
		
		\section{Proof of Theorem~\ref{theorem:ULSINR}}\label{theorem1}
		Regarding the desired signal power part, we have
		\begin{align}
			\EE\left\{\bv_{k}^{\H}\bc_{k}\right\}&=\tr(\EE\{\bc_{k}\hat{\bc}_{k}\})\nn\\
			&=\tr(\EE\{\bc_{k}\hat{\bR}_{k}\bQ_{k} \br_{k}\})\label{desired1}\\
			&=\tr(	\bPsi_{k})\label{desired2},
		\end{align}
		where, in \eqref{desired1}, we have inserted \eqref{received1}, while in the last equation, we have computed the expectation between $   \bc_{k}$ and $  \br_{k} $, and we have used \eqref{Psiexpress}.
		
		In the case of $ I_{k} $, the first term can be written as
		\begin{align}
			&\!\!\EE\left\{|\hat{\bc}_{k}^{\H}\bc_{k}-\EE\left\{\hat{\bc}_{k}^{\H}\bc_{k}\right\}|^{2}\right\}\!=\!
			\EE\big\{ \big|\hat{\bc}_{k}^{\H}\bc_{k}\big|^{2}\big\}\!-\!\big|\EE\left\{\hat{\bc}_{k}^{\H}\bc_{k}\right\}\big|^{2} \label{est2}\\
			&=\EE\big\{ \big| \hat{\bc}_{k}^{\H} \hat{\bc}_{k} +\hat{\bc}_{k}^{\H}\tilde{\bc}_{k}\big|^{2}\big\}-\big|\EE\big\{
			\hat{\bc}_{k}^{\H}\hat{\bc}_{k}\big\}\big|^{2}\label{est3} \\
			&=\EE\big\{ \big|\hat{\bc}_{k}^{\H}\tilde{\bc}_{k}\big|^{2}\big\}+2\Re\{\EE\{\hat{\bc}_{k}^{\H}\hat{\bc}_{k}\hat{\bc}_{k}^{\H}\tilde{\bc}_{k}\}\}\label{est31}
		\end{align}
		where in~\eqref{est3}, we have substituted \eqref{current}.  The second term in~\eqref{est31} goes to zero according to the channel hardening property in mMIMO, which gives with high accuracy $\hat{\bc}_{k}^{\H}\hat{\bc}_{k} \approx\EE\{\hat{\bc}_{k}^{\H}\hat{\bc}_{k}\} $  \cite{Bjoernson2017}.
		Specifically, the second term becomes
		\begin{align}
			\EE\{\hat{\bc}_{k}^{\H}\hat{\bc}_{k}\hat{\bc}_{k}^{\H}\tilde{\bc}_{k}\}\approx\EE\{\hat{\bc}_{k}^{\H}\hat{\bc}_{k}\}\EE\{\hat{\bc}_{k}^{\H}\tilde{\bc}_{k}\}=0.
		\end{align}
		Hence, \eqref{est31} becomes
		\begin{align}
			&\!\!\EE\big\{ \big| \hat{\bc}_{k}^{\H}\bc_{k}-\EE\big\{
			\hat{\bc}_{k}^{\H}\bc_{k}\big\}\big|^{2}\big\}\!\approx \!\EE\big\{|\hat{\bc}_{k}^{\H}\tilde{\bc}_{k}|^{2}\big\} \label{est5}\\
			&=\tr\!\left( \hat{\bR}_{k}\bPsi_{k}\right)-\tr\left( \bPsi_{k}^{2}\right).\label{est4}
		\end{align}
		For the derivation of \eqref{est4}, we apply the approximations $\hat{\bc}_{k}^{\H}\hat{\bc}_{k} \approx\EE\{\hat{\bc}_{k}^{\H}\hat{\bc}_{k}\} $ and $\tilde{\bc}_{k}^{\H}\tilde{\bc}_{k} \approx\EE\{\tilde{\bc}_{k}^{\H}\tilde{\bc}_{k}\} $ due to  channel hardening. Thus, we result in 
		\begin{align}
			\EE\{|\hat{\bc}_{k}^{\H}\tilde{\bc}_{k}|^2\}&=  \tr(\EE\{\hat{\bc}_{k}\hat{\bc}_{k}^{\H}\tilde{\bc}_{k}\tilde{\bc}_{k}^{\H}\})\\
			&\approx \tr(\EE\{\hat{\bc}_{k}\hat{\bc}_{k}^{\H}\}\EE\{\tilde{\bc}_{k}\tilde{\bc}_{k}^{\H}\})\\
			&	= \tr((\hat{\bR}_k-\bPsi_k)\bPsi_k).\label{BU}
		\end{align}
		
		The multiuser interference term in \eqref{int1} becomes
		\begin{align}
			&	\EE\left\{|\hat{\bc}_{k}^{\H}\bc_{i}|^{2}\right\}=\EE\big\{ \big| \hat{\bc}_{k}^{\H}(\hat{\bc}_{i}+\tilde{\bc}_{i})\big|^{2}\big\}\\
			&=\EE\big\{ \big| \hat{\bc}_{k}^{\H}\hat{\bc}_{i}\big|^{2}\big\}+\EE\big\{ \big|\hat{\bc}_{k}^{\H}\tilde{\bc}_{i}\big|^{2}\big\}+2\Re\{\EE\{\hat{\bc}_{k}^{\H}\hat{\bc}_{i}\tilde{\bc}_{i}^{\H}\hat{\bc}_{k}\}\}\label{52}\\
			&=		\tr\!\left(\bPsi_{k}\hat{\bR}_{i} \right),\label{54}
		\end{align}
		where the third term in \eqref{54} is zero because it holds that
		\begin{align}
			2\Re\{\EE\{\hat{\bc}_{k}^{\H}\hat{\bc}_{i}\tilde{\bc}_{i}^{\H}\hat{\bc}_{k}\}\}&\!=\!2\Re\left\{\EE\{\tr\!\left(\!\left(\hat{\bc}_{k}\hat{\bc}_{k}^{\H}\right)\!\!\left(\tilde{\bc}_{i}\hat{\bc}_{i}^{\H}\right)\!\right)\!\right\}\\
			&\!=\!2\Re\left\{\tr\!\left(\EE\{\hat{\bc}_{k}\hat{\bc}_{k}^{\H}\}\EE\{\tilde{\bc}_{i}\hat{\bc}_{i}^{\H}\}\right)\!\right\}\label{521}\\
			&=0\label{522},
		\end{align}
		where, first, we used that $ \hat{\bc}_{k} $ and $ \hat{\bc}_{i} $ are independent, and next that $ \tilde{\bc}_{i} $ and $  \hat{\bc}_{i} $ are uncorrelated.
		
		The last term in \eqref{int1} is written as
		\begin{align}
			\EE\left\{|\hat{\bc}_{k}^{\H}\bn|^{2}\right\}&=\tr(\EE\{\hat{\bc}_{k}\hat{\bc}_{k}^{\H}\})\label{noise1}\\
			&=\tr(\bPsi_{k})\label{noise2},
		\end{align}
		where, in \eqref{noise1}, we have used that $ \hat{\bc}_{k} $ and $ \bn $ are uncorrelated.
		
		The combination of \eqref{est4}, \eqref{54}, and \eqref{noise1} provides the approximate of ${I}_{k}  $, which is $ \tilde{I}_{k} $, and concludes the proof.
		
		\section{Proof of Proposition~\ref{propositionGradient}}\label{proposition1}	
		First, we focus on the derivation of $\nabla_{\bphi_{l}}f(\bphi_{l},\blambda_{s}) $. From \eqref{Maximization1}, we can easily obtain
		\begin{align}
			\nabla_{\bphi_{l}}f(\bphi_{l},\blambda_{s})=\frac{\tau_{c}-\tau}{\tau_{c}\log_{2}(e)}\sum_{k=1}^{K}\frac{\tilde{I}_{k}\nabla_{\bphi_{l}}S_{k}-S_{k}\nabla_{\bphi_{l}}\tilde{I}_{k}}{(1+\gamma_{k})\tilde{I}_{k}}.\label{grad10}
		\end{align}
		For the computation of $ \nabla_{\bphi_{l}}S_{k} $,  we obtain its differential as
		\begin{align}
			d(S_{k})&=d\bigl(\tr(\boldsymbol{\Psi}_{k})^{2}\bigr)\nn\\
			&=2\tr(\boldsymbol{\Psi}_{k})d\tr(\boldsymbol{\Psi}_{k})\nn\\
			&=2\tr(\boldsymbol{\Psi}_{k})\tr(d\boldsymbol{\Psi}_{k}).\label{eq:dSk}
		\end{align}
		Application of  \cite[Eq. (3.35)]{hjorungnes:2011} gives
		\begin{align}
			&d(\boldsymbol{\Psi}_{k})=d(\hat{\mathbf{R}}_{k}\mathbf{Q}_{k}\hat{\mathbf{R}_{k}})\nn\\
			&=d(\hat{\mathbf{R}_{k}})\mathbf{Q}_{k}\hat{\mathbf{R}}_{k}+\hat{\mathbf{R}}_{k}d(\mathbf{Q}_{k})\hat{\mathbf{R}}_{k}+\hat{\mathbf{R}}_{k}\mathbf{Q}_{k}d(\hat{\mathbf{R}}_{k}).\label{eq:dPsik}
		\end{align}
		To this end, we have to derive $ d(\hat{\mathbf{R}_{k}}) $ and $ d(\mathbf{Q}_{k}) $. The former can be written as
		\begin{align}
			d(\hat{\mathbf{R}}_{k})&=\bW^{1^{\H}}d(\bP^{H})\bR_{k}\bW^{1}\bP+\bW^{1^{\H}}\bP^{H}\bR_{k}\bW^{1}d(\bP)\label{diffEstimated}.
		\end{align}
		The latter is derived based on \cite[eqn. (3.40)]{hjorungnes:2011} as
		\begin{align}
			&d(\mathbf{Q}_{k})  =d\bigl(\hat{\mathbf{R}}_{k}+\frac{1}{\tau P}\mathbf{I}_{M}\bigr)^{-1}\nn\\
			&=-\bigl(\hat{\mathbf{R}}_{k}+\frac{1}{\tau P}\mathbf{I}_{M}\bigr)^{-1}d\bigl(\hat{\mathbf{R}}_{k}+\frac{1}{\tau P}\mathbf{I}_{M}\bigr)\bigl(\hat{\mathbf{R}}_{k}+\frac{1}{\tau P}\mathbf{I}_{M}\bigr)^{-1}\nonumber \\
			& =-\mathbf{Q}_{k}d(\hat{\mathbf{R}}_{k})\mathbf{Q}_{k}.\label{eq:dQk}
		\end{align}
		Substitution of \eqref{diffEstimated} and \eqref{eq:dQk} into \eqref{eq:dPsik} allows to obtain 
		$ d(S_{k}) $ from \eqref{eq:dSk} as
		
		\begin{align}
			d(S_{k})	&=2\tr(\boldsymbol{\Psi}_{k})\tr\big(\bA_{l}^{\H}\bR_{k}\bW^{1}\bP\bB_k\bW^{1^{\H}}\bC_{l}^{\H} d(\bPhi^{l^{\H}})\nn\\
			&+\bC_{l}\bB_{k}\bW^{1^{\H}}\bP^{H}\bR_{k}\bW^{1}\bA_{l} d(\bPhi^{l})\big)\label{dsk1}\\
			&=2\tr(\boldsymbol{\Psi}_{k})\left(\diag(\bA_{l}^{\H}\bR_{k}\bW^{1}\bP\bB_k\bW^{1^{\H}}\bC_{l}^{\H})\right)^{\T}d(\bphi^{l^{*}})\nn\\
			&+\left(\diag(\bC_{l}\bB_{k}\bW^{1^{\H}}\bP^{H}\bR_{k}\bW^{1}\bA_{l})\right)^{\T}d(\bphi^{l})\label{dif1},
		\end{align}
		where $\bB_{k}= \mathbf{Q}_{k}\hat{\mathbf{R}}_{k}-\mathbf{Q}_{k}\hat{\mathbf{R}}^{2}_{k}\mathbf{Q}_{k}+\hat{\mathbf{R}}_{k}\mathbf{Q}_{k} $, $ 	\bA_{l}=\bPhi^{L}\bW^{L}\cdots\bPhi^{l+1}\bW^{l+1} $, and $\bC_{l}= \bW^{l}\bPhi^{l-1}\bW^{l-1}\cdots \bPhi^{1} $. In \eqref{dsk1}, we have also used \eqref{TransmitterSIM} by writing its differential as $ d(\bP)=\bA_{l} d(\bPhi_{l})\bC_{l} $.
		
		It can be easily obtained from \eqref{dif1} that
		\begin{align}
			\nabla_{\bphi_{l}}S_{k} &=2\tr(\boldsymbol{\Psi}_{k})\nn\\
			&\times \frac{\partial}{\partial{(\bphi_{l}^{*})}}	
			\Big(\left(\diag(\bA_{l}^{\H}\bR_{k}\bW^{1}\bP\bB_k\bW^{1^{\H}}\bC_{l}^{\H})\right)^{\T}d(\bphi_{l}^{*})\nn\\
			&+\left(\bC_{l}\bB_{k}\bW^{1^{\H}}\bP^{H}\bR_{k}\bW^{1}\bA_{l})\right)^{\T}d(\bphi_{l})\Big)\nn\\
			&=2\tr(\boldsymbol{\Psi}_{k})\diag(\bA_{l}^{\H}\bR_{k}\bW^{1}\bP\bB_k\bW^{1^{\H}}\bC_{l}^{\H})\label{differentialPhi7}.
		\end{align}
		Regarding $\nabla_{\bphi_{l}}\tilde{I}_{k}$, we obtain the differential from \eqref{Den1} as
		\begin{align}
			d(\tilde{I}_{k}) =\tr(\boldsymbol{\Psi}d(\hat{\mathbf{R}}_{k}))+\sum\nolimits _{i=1}^{K}\tr(\hat{\mathbf{R}}_{k}d(\boldsymbol{\Psi}_{i}))+\tr\bigl(\bar{\boldsymbol{\Psi}}_{k}d(\boldsymbol{\Psi}_{k})\bigr),\label{Den2}
		\end{align}
		where $\boldsymbol{\Psi}=\sum\nolimits _{i=1}^{K}\boldsymbol{\Psi}_{i}$, $\bar{\boldsymbol{\Psi}}_{k}=\frac{1}{\rho}\Id-2\boldsymbol{\Psi}_{k}$. Having obtained the differentials in \eqref{Den2} previously and after several algebraic manipulations, we obtain
		\begin{align}
			\nabla_{\bphi_{l}}\tilde{I}_{k}&=\frac{\partial}{\partial\bphi_{l}^{*}}\tilde{I}_{k}\\
			&=\diag\big(\bA_{l}^{\H}\big(\bR_{k}\bW^{1}\bP\big({\bPsi}+\bar{\bPsi}_{k}\bB_k\big)\nn\\
			&+\sum\nolimits _{i=1}^{K}\bR_{i}\bW^{1}\bP\bB_i\big)\bW^{1^{\H}}\bC_{l}^{\H}\big)\label{gradI1}.
		\end{align}
		In the case of $\nabla_{\blambda_{s}}f(\bphi_{l},\blambda_{s}) $, we obtain   similar expressions,  but now $ 	d(\hat{\mathbf{R}}_{k}) $  is derived as
		\begin{align}
			&d(\hat{\mathbf{R}}_{k})=\bW^{1^{\H}}\bP^{H}d(\bR_{k})\bW^{1}\bP\nn\\
			&	=\hat{\beta}_{k}\bW^{1^{\H}}\bP^{H}\tr\big(\bR_{\mathrm{CSIM}}d(\bZ){\bR}_{\mathrm{CSIM}}\bZ^{\H}\nn\\
			&+\bR_{\mathrm{CSIM}}\bZ{\bR}_{\mathrm{CSIM}}d(\bZ^{\H})\big)\bR_{\mathrm{BSIM}}\bW^{1}\bP\label{cov4}\\
			&=\hat{\beta}_{k}\bW^{1^{\H}}\bP^{H}\tr\big(\bF_{s}{\bR}_{\mathrm{CSIM}}\bZ^{\H}\bR_{\mathrm{CSIM}}\bD_{s}d(\bLambda_{s})\nn\\
			&+(\bF_{s}{\bR}_{\mathrm{CSIM}}\bZ^{\H}\bR_{\mathrm{CSIM}}\bD_{s})^{\H}d(\bLambda_{s}^{\H})\big)\bR_{\mathrm{BSIM}}\bW^{1}\bP\label{diffEstimated2},
		\end{align}
		where, in \eqref{52}, we have substituted \eqref{cov3}. We have also denoted
		$ \bD_{s}= \bU^{s-1}\bLambda^{s-1}\cdots\bU^{2}\bLambda^{1}$ and $ \bF_{s}= \bLambda^{S}\bU^{S}\cdots\bLambda^{s+1}\bU^{s+1} $. 
		In \eqref{diffEstimated2}, we have  used \eqref{ReceiverSIM} by writing its differential as $ d(\bZ)=\bD_{s} \bLambda_{s}\bF_{s} $.

		Substitution of \eqref{diffEstimated2} and \eqref{eq:dQk} into \eqref{eq:dPsik} allows to obtain 
		$ d(S_{k}) $ from \eqref{eq:dSk} as
		\begin{align}
			d(S_{k})	
			&=2\hat{\beta}_{k}\tr(\boldsymbol{\Psi}_{k})\tr(\bW^{1^{\H}}\bP^{H}\bR_{\mathrm{BSIM}}\bW^{1}\bP\bB_{k})\nn\\
			&\times\left(\diag(\bF_{s}{\bR}_{\mathrm{CSIM}}\bZ^{\H}\bR_{\mathrm{CSIM}}\bD_{s})^{\H}\right)^{\T}d(\blambda_{s}^{*})\nn\\
			&+\left(\diag(\bF_{s}{\bR}_{\mathrm{CSIM}}\bZ^{\H}\bR_{\mathrm{CSIM}}\bD_{s})\right)^{\T}d(\blambda_{s})\label{dif5}.
		\end{align}
		
		Thus, we obtain 
		\begin{align}
			\nabla_{\blambda_{s}}S_{k}
			&=2\hat{\beta}_{k}\tr(\boldsymbol{\Psi}_{k})\tr(\bW^{1^{\H}}\bP^{H}\bR_{\mathrm{BSIM}}\bW^{1}\bP\bB_{k})\nn\\
			&\times\diag(\bF_{s}{\bR}_{\mathrm{CSIM}}\bZ^{\H}\bR_{\mathrm{CSIM}}\bD_{s})^{\H}.
		\end{align}
		Similar to \eqref{gradI1}, we obtain from \eqref{Den2} and \eqref{diffEstimated2} and after some algebraic manipulations
		\begin{align}
			\nabla_{\blambda_{s}}\tilde{I}_{k}&=
			2\tr\big(\bW^{1^{\H}}\bP^{H}\bR_{\mathrm{BSIM}}\bW^{1}\bP\nn\\
			&\times(\sum\nolimits _{i=1}^{K}\hat{\beta}_{i}\hat{\mathbf{R}}_{k}\bB_{i}+\hat{\beta}_{k}(\bar{\boldsymbol{\Psi}}_{k}\bB_{k}+\boldsymbol{\Psi})) \big)&\nn\\
			& \times\diag(\bF_{s}{\bR}_{\mathrm{CSIM}}\bZ^{\H}\bR_{\mathrm{CSIM}}\bD_{s})^{\H},
		\end{align}
		which concludes the proof.
	\end{appendices}

	\bibliographystyle{IEEEtran}
	
	\bibliography{IEEEabrv,bibl}

\begin{thebibliography}{10}
\providecommand{\url}[1]{#1}
\csname url@samestyle\endcsname
\providecommand{\newblock}{\relax}
\providecommand{\bibinfo}[2]{#2}
\providecommand{\BIBentrySTDinterwordspacing}{\spaceskip=0pt\relax}
\providecommand{\BIBentryALTinterwordstretchfactor}{4}
\providecommand{\BIBentryALTinterwordspacing}{\spaceskip=\fontdimen2\font plus
\BIBentryALTinterwordstretchfactor\fontdimen3\font minus
  \fontdimen4\font\relax}
\providecommand{\BIBforeignlanguage}[2]{{%
\expandafter\ifx\csname l@#1\endcsname\relax
\typeout{** WARNING: IEEEtran.bst: No hyphenation pattern has been}%
\typeout{** loaded for the language `#1'. Using the pattern for}%
\typeout{** the default language instead.}%
\else
\language=\csname l@#1\endcsname
\fi
#2}}
\providecommand{\BIBdecl}{\relax}
\BIBdecl

\bibitem{Boccardi2014}
F.~Boccardi \emph{et~al.}, ``Five disruptive technology directions for {5G},''
  \emph{IEEE Commun. Mag.}, vol.~52, no.~2, pp. 74--80, 2014.

\bibitem{Zhang2020b}
J.~Zhang \emph{et~al.}, ``Prospective multiple antenna technologies for beyond
  {5G},'' \emph{IEEE J. Sel. Areas Commun.}, vol.~38, no.~8, pp. 1637--1660,
  2020.

\bibitem{Sohrabi2016}
F.~Sohrabi and W.~Yu, ``Hybrid digital and analog beamforming design for
  large-scale antenna arrays,'' \emph{IEEE J. Sel. Top. Signal Proc.}, vol.~10,
  no.~3, pp. 501--513, 2016.

\bibitem{Rappaport2015}
T.~S. Rappaport \emph{et~al.}, ``Wideband millimeter-wave propagation
  measurements and channel models for future wireless communication system
  design,'' \emph{IEEE Trans. Commun.}, vol.~63, no.~9, pp. 3029--3056, 2015.

\bibitem{Wu2019}
Q.~Wu and R.~Zhang, ``Intelligent reflecting surface enhanced wireless network
  via joint active and passive beamforming,'' \emph{IEEE Trans. Wireless
  Commun.}, vol.~18, no.~11, pp. 5394--5409, 2019.

\bibitem{Basar2019}
E.~Basar \emph{et~al.}, ``Wireless communications through reconfigurable
  intelligent surfaces,'' \emph{IEEE Access}, vol.~7, pp. 116\,753--116\,773,
  2019.

\bibitem{Bjoernson2020}
E.~{Bj{\"o}rnson} and L.~{Sanguinetti}, ``Rayleigh fading modeling and channel
  hardening for reconfigurable intelligent surfaces,'' \emph{IEEE Wireless
  Commun. Lett.}, vol.~10, no.~4, pp. 830--834, 2021.

\bibitem{Papazafeiropoulos2021}
A.~Papazafeiropoulos \emph{et~al.}, ``Intelligent reflecting surface-assisted
  {MU-MISO} systems with imperfect hardware: {Channel} estimation and
  beamforming design,'' \emph{IEEE Trans. Wireless Commun.}, vol.~21, no.~3,
  pp. 2077--2092, 2021.

\bibitem{Papazafeiropoulos2022}
A.~Papazafeiropoulos, ``Ergodic capacity of {IRS}-assisted {MIMO} systems with
  correlation and practical phase-shift modeling,'' \emph{IEEE Wireless Commun.
  Lett.}, vol.~11, no.~2, pp. 421--425, 2022.

\bibitem{Huang2019}
C.~Huang \emph{et~al.}, ``Reconfigurable intelligent surfaces for energy
  efficiency in wireless communication,'' \emph{IEEE Transa. Wireless Commun.},
  vol.~18, no.~8, pp. 4157--4170, 2019.

\bibitem{DiRenzo2020}
M.~Di~Renzo \emph{et~al.}, ``Smart radio environments empowered by
  reconfigurable intelligent surfaces: {How} it works, state of research, and
  the road ahead,'' \emph{IEEE J. Sel. Areas Commun.}, vol.~38, no.~11, pp.
  2450--2525, 2020.

\bibitem{Kammoun2020}
Q.~U.~A. {Nadeem} \emph{et~al.}, ``Asymptotic max-min {SINR} analysis of
  reconfigurable intelligent surface assisted {MISO} systems,'' \emph{IEEE
  Trans. Wireless Commun.}, vol.~19, no.~12, pp. 7748--7764, 2020.

\bibitem{Yang2020b}
Y.~Yang \emph{et~al.}, ``Intelligent reflecting surface meets {OFDM: Protocol}
  design and rate maximization,'' \emph{IEEE Trans. Commun.}, vol.~68, no.~7,
  pp. 4522--4535, 2020.

\bibitem{Pan2020}
C.~{Pan} \emph{et~al.}, ``Multicell {MIMO} communications relying on
  intelligent reflecting surfaces,'' \emph{IEEE Trans. Wireless Commun.},
  vol.~19, no.~8, pp. 5218--5233, 2020.

\bibitem{Han2019}
Y.~Han \emph{et~al.}, ``Large intelligent surface-assisted wireless
  communication exploiting statistical {CSI},'' \emph{IEEE Trans. Veh. Tech.},
  vol.~68, no.~8, pp. 8238--8242, 2019.

\bibitem{Zhang2021a}
J.~Zhang \emph{et~al.}, ``Large system achievable rate analysis of
  {RIS}-assisted {MIMO} wireless communication with statistical {CSIT},''
  \emph{IEEE Trans. Wireless Commun.}, vol.~20, no.~9, pp. 5572--5585, 2021.

\bibitem{Zhao2020}
M.-M. Zhao \emph{et~al.}, ``Intelligent reflecting surface enhanced wireless
  networks: {Two}-timescale beamforming optimization,'' \emph{IEEE Trans.
  Wireless Commun.}, vol.~20, no.~1, pp. 2--17, 2020.

\bibitem{Papazafeiropoulos2021a}
A.~Papazafeiropoulos \emph{et~al.}, ``Coverage probability of distributed {IRS}
  systems under spatially correlated channels,'' \emph{IEEE Wireless Commun.
  Lett.}, vol.~10, no.~8, pp. 1722--1726, 2021.

\bibitem{Gan2022}
X.~Gan \emph{et~al.}, ``Multiple {RISs} assisted cell-free networks with
  two-timescale {CSI: Performance} analysis and system design,'' \emph{IEEE
  Trans. Commun.}, vol.~70, no.~11, pp. 7696--7710, 2022.

\bibitem{Zhi2023}
K.~Zhi \emph{et~al.}, ``Two-timescale design for reconfigurable intelligent
  surface-aided massive {MIMO} systems with imperfect {CSI},'' \emph{IEEE
  Trans. Inf. Theory}, vol.~69, no.~5, pp. 3001--3033, 2023.

\bibitem{Papazafeiropoulos2023}
A.~Papazafeiropoulos \emph{et~al.}, ``Achievable rate of a {STAR-RIS} assisted
  massive {MIMO} system under spatially-correlated channels,'' \emph{IEEE
  Trans. Wireless Commun.}, pp. 1--1, 2023.

\bibitem{Papazafeiropoulos2023a}
A.~Papazafeiropoulos, P.~Kourtessis, and S.~Chatzinotas, ``{Max-Min SINR}
  analysis of {STAR-RIS} assisted massive {MIMO} systems with hardware
  impairments,'' \emph{IEEE Trans. Wireless Commun.}, pp. 1--1, 2023.

\bibitem{An2023}
J.~An \emph{et~al.}, ``Stacked intelligent metasurfaces for efficient
  holographic {MIMO} communications in {6G},'' \emph{IEEE J. Sel. Areas
  Commun.}, 2023.

\bibitem{Lin2018}
X.~Lin \emph{et~al.}, ``All-optical machine learning using diffractive deep
  neural networks,'' \emph{Science}, vol. 361, no. 6406, pp. 1004--1008, 2018.

\bibitem{Liu2022}
C.~Liu \emph{et~al.}, ``A programmable diffractive deep neural network based on
  a digital-coding metasurface array,'' \emph{Nature Electronics}, vol.~5,
  no.~2, pp. 113--122, 2022.

\bibitem{An2023b}
J.~An \emph{et~al.}, ``Stacked intelligent metasurfaces for multiuser downlink
  beamforming in the wave domain,'' \emph{arXiv preprint arXiv:2309.02687},
  2023.

\bibitem{Nadeem2023}
Q.-U.-A. Nadeem, J.~An, and A.~Chaaban, ``Hybrid digital-wave domain channel
  estimator for stacked intelligent metasurface enabled multi-user {MISO}
  systems,'' \emph{arXiv preprint arXiv:2309.16204}, 2023.

\bibitem{Abeywickrama2020}
S.~Abeywickrama \emph{et~al.}, ``Intelligent reflecting surface: {Practical}
  phase shift model and beamforming optimization,'' \emph{IEEE Trans. Commun.},
  vol.~68, no.~9, pp. 5849--5863, 2020.

\bibitem{Wang2024}
Z.~Wang \emph{et~al.}, ``Multi-user {ISAC} through stacked intelligent
  metasurfaces: {New} algorithms and experiments,'' \emph{arXiv preprint
  arXiv:2405.01104}, 2024.

\bibitem{Nadeem2020}
Q.~{Nadeem} \emph{et~al.}, ``Intelligent reflecting surface-assisted multi-user
  {MISO Communication: Channel} estimation and beamforming design,'' \emph{IEEE
  Open J. Commun. Soc.}, vol.~1, pp. 661--680, 2020.

\bibitem{Deshpande2022}
N.~V. Deshpande \emph{et~al.}, ``Spatially-correlated irs-aided multiuser {FD
  mMIMO} systems: {Analysis} and optimization,'' \emph{IEEE Trans. Commun.},
  vol.~70, no.~6, pp. 3879--3896, 2022.

\bibitem{Wu2021}
C.~Wu \emph{et~al.}, ``Channel estimation for {STAR-RIS}-aided wireless
  communication,'' \emph{IEEE Commun. Letters}, vol.~26, no.~3, pp. 652--656,
  2021.

\bibitem{Zheng2022}
B.~Zheng \emph{et~al.}, ``A survey on channel estimation and practical passive
  beamforming design for intelligent reflecting surface aided wireless
  communications,'' \emph{IEEE Commun. Sur. \& Tut.}, vol.~24, no.~2, pp.
  1035--1071, 2022.

\bibitem{Bjoernson2017}
E.~Bj{\"o}rnson \emph{et~al.}, ``Massive {MIMO} networks: Spectral, energy, and
  hardware efficiency,'' \emph{Foundations and Trends{\textregistered} in
  Signal Processing}, vol.~11, no. 3-4, pp. 154--655, 2017.

\bibitem{Hoydis2013}
J.~Hoydis, S.~ten Brink, and M.~Debbah, ``Massive {MIMO} in the {UL/DL} of
  cellular networks: How many antennas do we need?'' \emph{IEEE J. Select.
  Areas Commun.}, vol.~31, no.~2, pp. 160--171, 2013.

\bibitem{Papazafeiropoulos2015a}
A.~K. Papazafeiropoulos and T.~Ratnarajah, ``Deterministic equivalent
  performance analysis of time-varying massive {MIMO} systems,'' \emph{IEEE
  Trans. Wireless Commun.}, vol.~14, no.~10, pp. 5795--5809, 2015.

\bibitem{Papazafeiropoulos2016}
A.~K. Papazafeiropoulos, ``Impact of general channel aging conditions on the
  downlink performance of massive {MIMO},'' \emph{IEEE Trans. Veh. Tech.},
  vol.~66, no.~2, pp. 1428--1442, Feb 2017.

\bibitem{Zhang2020a}
S.~Zhang and R.~Zhang, ``Capacity characterization for intelligent reflecting
  surface aided {MIMO} communication,'' \emph{IEEE J. Sel. Areas Commun.},
  vol.~38, no.~8, pp. 1823--1838.

\bibitem{Perovic2021}
N.~S. Perovi{\'c} \emph{et~al.}, ``Achievable rate optimization for {MIMO}
  systems with reconfigurable intelligent surfaces,'' \emph{IEEE Trans.
  Wireless Commun.}, vol.~20, no.~6, pp. 3865--3882, 2021.

\bibitem{Bertsekas1999}
D.~Bertsekas, \emph{Nonlinear Programming}, 2nd~ed., M.~A. Scientific, Ed.,
  1999.

\bibitem{Sang2024}
J.~Sang \emph{et~al.}, ``Multi-scenario broadband channel measurement and
  modeling for {Sub-6 GHz RIS}-assisted wireless communication systems,''
  \emph{IEEE Trans. Wireless Commun.}, vol.~23, no.~6, pp. 6312--6329, 2024.

\bibitem{Kay}
S.~M. Kay, \emph{Fundamentals of statistical signal processing: Estimation
  theory}.\hskip 1em plus 0.5em minus 0.4em\relax Upper Saddle River: Prentice
  Hall PTR, 1993.

\bibitem{hjorungnes:2011}
A.~Hj{\o}rungnes, \emph{Complex-Valued Matrix Derivatives: With Applications in
  Signal Processing and Communications}.\hskip 1em plus 0.5em minus 0.4em\relax
  Cambridge University Press, 2011.

\end{thebibliography}
\end{document}